\newtheorem{theorem}{Theorem}
\begin{document}

\title{Multi-Source Coflow Scheduling in Collaborative Edge Computing with Multihop Network}

\author{Yuvraj Sahni, Jiannong Cao,~\IEEEmembership{Fellow,~IEEE,}, Lei Yang, and Shengwei Wang

\thanks{Yuvraj Sahni and Shengwei Wang are with the Department of Building Environment and Energy Engineering, The Hong Kong Polytechnic University, Hong Kong.  (E-mail: yuvraj-comp.sahni@polyu.edu.hk, beswwang@polyu.edu.hk)}
\thanks{Jiannong Cao is with the Department of Computing, The Hong Kong Polytechnic University, Hong Kong. (E-mail: jiannong.cao@polyu.edu.hk)}
\thanks{ Lei Yang is with the School of Software Engineering, South China University of Technology, Guangzhou, China. (E-mail: sely@scut.edu.cn)}% <-this % stops a space
}

\maketitle

\begin{abstract}
Collaborative edge computing has become a popular paradigm where edge devices collaborate by sharing resources. Data dissemination is a fundamental problem in CEC to decide what data is transmitted from which device and how. Existing works on data dissemination have not focused on coflow scheduling in CEC, which involves deciding the order of flows within and across coflows at network links. Coflow implies a set of parallel flows with a shared objective. The existing works on coflow scheduling in data centers usually assume a non-blocking switch and do not consider congestion at different links in the multi-hop path in CEC, leading to increased coflow completion time (CCT). Furthermore, existing works do not consider multiple flow sources that cannot be ignored, as data can have duplicate copies at different edge devices. This work formulates the multi-source coflow scheduling problem in CEC, which includes jointly deciding the source and flow ordering for multiple coflows to minimize the sum of CCT. This problem is shown to be NP-hard and challenging as each flow can have multiple dependent conflicts at multiple links. We propose a source and coflow-aware search and adjust (SCASA) heuristic that first provides an initial solution considering the coflow characteristics. SCASA further improves the initial solution using the source search and adjust heuristic by leveraging the knowledge of both coflows and network congestion at links. Evaluation done using simulation experiments shows that SCASA leads to up to 83\% reduction in the sum of CCT compared to benchmarks without a joint solution.
\end{abstract}

\begin{IEEEkeywords}
Coflow Scheduling, Collaborative Edge Computing, Data dissemination, Multiple sources.
\end{IEEEkeywords}

\section{Introduction}

\IEEEPARstart{I}{n} the past decade, edge computing has become popular as a key technology to enable applications such as Smart Home, Smart City, Smart Healthcare, etc. Edge computing addresses the limitations of centralized cloud computing, such as high latency, limited scalability, privacy leakage, etc., by pushing computation, storage, and other services to edge devices closer to data sources. However, the proliferation of devices and data has led to the development of emerging applications such as autonomous vehicles, augmented and virtual reality, Industry IoT, etc., requiring collaboration among different system components. Therefore, researchers have recently proposed collaborative edge computing (CEC) that enables collaboration among edge devices by leveraging the sharing of different resources, including computation, data, and services \cite{dong2021collaborative} \cite{sahni2017edge}. 

One of the fundamental problems to support collaboration in CEC is optimizing the data dissemination among edge devices while sharing data and scheduling computation tasks. Data dissemination problem in CEC includes to deciding what data is transmitted from which device and how. Most existing work on data dissemination in context of edge computing have studied optimizing the decisions such as data access, data placement, routing, bandwidth allocation, etc., subject to system constraints \cite{aral2018decentralized} \cite{liu2020fog}, \cite{yang2019efficient}, \cite{singh2020intent} \cite{luo2020edgevcd}. However, few works have considered coflow scheduling, i.e., deciding the order of flows within and across coflows at congested network links, in CEC with a mesh network of edge devices connected using multi-hop paths. Coflow is the abstraction of parallel flows with shared objective, corresponding to the transmission of the multiple data requested at the edge devices \cite{chowdhury2014efficient}. Coflows can be observed in the case of scheduling both dependent and independent application tasks in edge computing. The different dependent tasks are often modeled as a directed acyclic graph (DAG), which implies a set of parallel flows corresponding to the transfer of data from preceding tasks to the current task. Even for independent tasks, each task can require multiple input data, leading to a set of parallel flows to transmit data. Such transmission of parallel flows is referred to as coflows. Instead of optimizing individual flows' completion time, the coflow completion time (CCT) optimizes the completion time of application tasks \cite{chowdhury2014efficient}. Furthermore, the requested data can be generated at different sources (edge devices), or the data can have duplicate copies at different sources that would require jointly considering the decision on selecting the source of different flows within each coflow. 

This paper studies the multi-source coflow scheduling problem in CEC to minimize the sum of CCT. The problem is novel as we consider making the joint decision on each flow's source and order among flows at different network links. Some existing works have studied the related problem of coflow scheduling in data centers, which also includes deciding the order among coflows to minimize the CCT \cite{chowdhury2014efficient} \cite{luo2016towards} \cite{ahmadi2020scheduling} \cite{chowdhury2019near}. However, the problem in this paper for CEC has some characteristic differences compared to existing coflow scheduling in data center networks. First, existing works usually assume a non-blocking switch that interconnects all physical hosts. However, each data flow in CEC is transmitted through a multi-hop network, leading to congestion in different links. Second, existing works consider flows can share bandwidth; however, for the problem in our paper, it is assumed that only one flow can occupy a link at one time. Third, existing works often assume fixed placement of flows, i.e., each flow is generated at a fixed source. However, there can be multiple sources for each data type in CEC, and the selection of sources can lead to different paths and conflicts. Furthermore, the data can be generated at different sources at different release times, leading to additional complexity. 

We have formulated the problem as an NP-hard optimization problem. The main challenge in the problem is to consider conflicts among multiple flows at different links to reduce the overall sum of CCT. Since there are multiple conflicting flows, each flow can suffer from multiple dependent conflicts at multiple different links in the path. Besides, the multiple conflicts at different links in the path can be among a different set of flows, leading to increased complexity. We have proposed a source and coflow-aware search and adjust (SCASA) heuristic to solve the problem. SCASA first decides the source and flow ordering by using an initial solution that considers global knowledge of coflows while ignoring network congestion at individual links. In the next step, we incorporate the knowledge of network congestion at links by proposing the source search and adjust heuristic that improves both the source selection and flow order to minimize the sum of CCT. SCASA has been extensively evaluated by comparing it with different benchmark solutions using simulation experiments. Performance comparison shows SCASA leads to significant performance improvement of up to 83\% reduction in the sum of CCT compared to different benchmark solutions. 

The coflow scheduling problem solved in this paper is practically useful for both sharing data across edge devices and scheduling tasks in different application domains. One application example is large-scale multi-camera video analytics, where video and image data from multiple cameras are transmitted at edge devices to generate situational awareness. Therefore, a set of parallel flows, i.e., coflow, are required to transmit multiple input data for the situational awareness task at different edge devices. Another important example is a building automation system for smart buildings, where executing the control operations for each subsystem (heating, ventilation, and air-conditioning (HVAC), lighting, security, energy, etc.) requires coflow to transmit input data from multiple collaborating subsystems. In both scenarios, there can be multiple redundant sensors for each input data type, implying multiple sources for each data flow. 

The main contributions of this work are: 

\begin{enumerate}
\item We mathematically formulate the problem for multi-source coflow scheduling in CEC to minimize the sum of CCT. The problem is formulated as a mixed-integer non-linear programming (MINLP) problem and shown to be NP-hard by reducing from a well-known NP-hard problem, a job-shop scheduling problem (JSSP) with the min-sum objective. To the best of our knowledge, we are the first to consider such a problem. 
\item We propose a heuristic algorithm SCASA that leverages the knowledge of coflows and network congestion at the links to decide source and flow order. SCASA first proposes an initial solution and then improves the source selection and flow ordering using the proposed source search and adjust heuristic. 
\item We do a comprehensive numerical evaluation of SCASA using simulation experiments in terms of the performance metric as the sum of CCT. SCASA has been compared with both random solution and relaxed solution obtained using the relaxation from MINLP to LP problem. We have also compared SCASA with several other benchmark solutions by varying different input parameters, including the number of sources, number of flows, number of coflows, the average release time of flows, and the number of devices in the network. 
\end{enumerate}

The rest of the paper is organized as follows: In Section 2, we provide the system model and problem formulation. In Section 3, we provide details on the proposed heuristic SCASA and its computation complexity. In Section 4, we explain the results obtained during performance evaluation. In Section 5, we discuss some related works. Finally, we give the conclusion in Section 6. 

\section{System Model and Problem Formulation}

This section describes the system model, including the network and coflow model, and the problem formulation. 

\subsection{System Model}

Fig \ref{f:edgemesh} shows the system architecture of Edge Mesh, an abstraction of CEC where the intelligence is distributed and pushed within the network by sharing computation resources and data between the mesh network of edge devices \cite{sahni2017edge}. An example scenario is illustrated in Fig \ref{f:edgemesh} where transmission of two coflows (marked red and blue) requested by two different devices leads to congestion at multiple links, shown by circles, in the network. Each coflow includes multiple flows, i.e., the transmission of multiple data types requested by the edge device marked as 1 and 2 in the example scenario. This paper considers that each flow can have multiple sources as the data can be generated at different redundant sources in CEC. The different redundant flow sources, i.e., the edge devices generating data, differ in terms of release time and transmission time. The data transmission corresponding to multiple flows leads to congestion and conflict among flows as the bandwidth is limited. 

The multi-source coflow scheduling problem in CEC studied in this paper involves minimizing the sum of CCT while deciding on the source and order of flows at each link. It is assumed that flows can be ordered at each link in the path as they traverse the multi-hop path from the source edge device to the destination edge device requesting data. The flow source selection and ordering decision is made at a centralized controller with global network knowledge, including flows and sources. The centralized controller, such as an SDN controller, is responsible for network monitoring, making coflow scheduling decisions, and sending it to each edge device that executes the decision. The system model does not consider the communication overhead of collecting the network information at the centralized controller. The system model assumes that edge devices are connected using a wireless network, however, we have solved the problem in an offline setting. We have not considered all the issues due to temporal dynamics in a wireless network and interference of wireless transmissions among neighboring devices. Nevertheless, these issues in the wireless network should be considered in future work. 

Some other assumptions have also been made in our model. First, the routing path for each flow is assumed to be known. Second, no two flows are allowed to pass through a link at the same time to consider the interference among simultaneous wireless transmissions. Third, the problem is formulated and solved for the static condition where the values of different parameters are known beforehand. The first two assumptions can be relaxed by including additional decision variables for routing path and bandwidth shared by different flows at each link. However, this would make the problem even more complex. The third assumption can be relaxed by solving the problem for online settings where future coflows are unknown. This paper focuses on multi-source, multi-coflow scheduling in a CEC environment, and these assumptions can be relaxed in future work. 

\begin{figure}[h]
\centering
\includegraphics[scale=0.35]{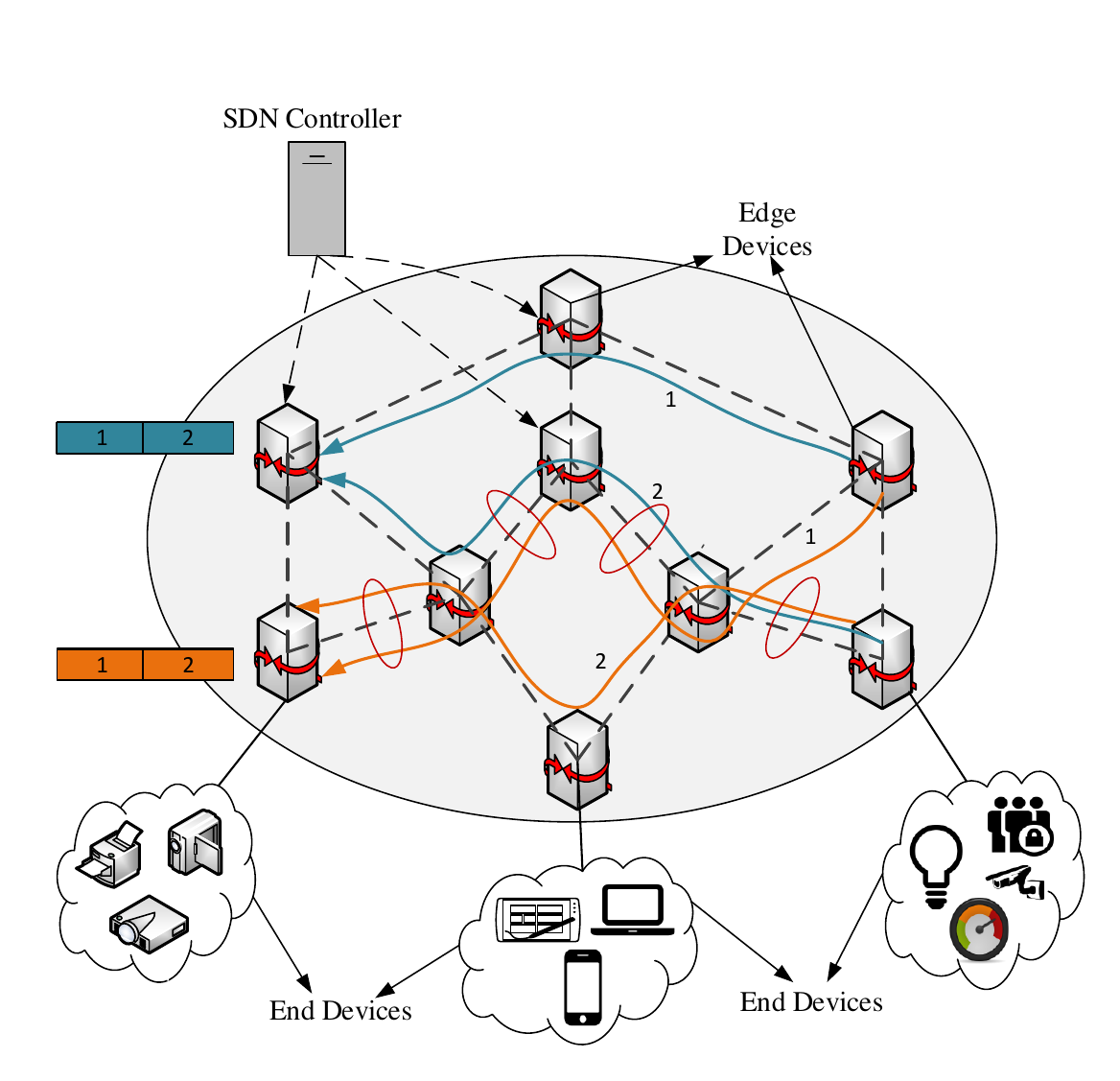}
\caption{System Architecture}
\label{f:edgemesh}
\end{figure}

Under the assumptions mentioned above, the network and coflow model used in formulating the problem are: 

\textit{Network model}: The communication network is a mesh network of edge devices connected using a multi-hop path. The communication network is modeled as a connected graph $G = (N1 \cup N2, E)$, where $N1$ is the set of devices requesting data, $N2$ is the set of the devices where different data is generated, and E is the set of links connecting different devices, $E = \{e_{iu} |i, u \in N1 \cup N2 \}$. Here, $e_{iu}$ corresponds to the bandwidth between devices $i$ and $u$. In this paper, we assume that a device requesting data will not generate data, i.e. $N1 \cap N2 = \emptyset$, to make it easier to formulate the problem. However, this assumption can be easily relaxed and will not affect the problem formulation and proposed solution. The total number of devices in the network is $N$, i.e. $|N1| + |N2| = N$.

\textit{Coflow model}: There are a set of coflows, represented by $N1$, in the network corresponding to edge devices requesting data. Each coflow $i$ in set $N1$ consists of a set of flows, $K_i$, as each edge device requests set $K_i$ of different data types. Each flow $j$ in set $K_i$ can be transmitted from an edge device $s$ in set $S_{ij}$ of sources, where $S_{ij} \subseteq N2$. The total number of flows within a coflow is $K$, while the number of sources for a flow is $S$. The different sources $s$ for a flow $j$ in coflow $i$ are assumed to have a different release time represented by $Trel_{ijs}$. Each flow $j$ in coflow $i$ from source $s$ is assumed to pass through a set of hops/links $H_{ijs}$. A binary parameter $L_{ijs}^{eh}$ (1 for yes, 0 for no) is used to represent if an edge $e$ in the network corresponds to hop $h$ for flow $j$ in coflow $i$ from source $s$. Each flow $j$ in coflow $i$ is associated with the amount of data to be transmitted, denoted by $D_{ij}$.  The bandwidth of the link for hop $h$ in the path of flow $j$ in coflow $i$ from source $s$ is denoted by $R_{ijs}^h$.  The cost of data transmission at hop $h$ for flow $j$ in coflow $i$ from source $s$ is denoted by $T_{ijs}^h$, as shown in Equation (1).

\begin{align}
T_{ijs}^h = \frac{D_{ij}}{R_{ijs}^h}
\end{align}

We do not include propagation time in communication cost as it is usually minimal. Further, we also ignore the switching cost for routing between subsequent links in the multi-hop path. In practice, these costs would influence the total cost; however, we ignore these costs to simplify the system model. Other works such as \cite{sundar2018offloading}, \cite{hong2019qos}, etc., have used similar assumptions to calculate the total cost. Therefore, the formulated problem and proposed solution in this paper can be directly extended by considering these costs together with transmission costs.

\subsection{Problem Formulation}

This section describes the constraints and formulates the problem as an optimization problem. There are two main decision variables used in formulating the problem: $X_{ijs}$ and $Y_{ijh}^{uvl}$. $X_{ijs}$ is a binary variable to denote the source selected for a flow. $X_{ijs} = 1$ means source $s$ is selected for flow $j$ in coflow $i$ and $X_{ijs} = 0$ otherwise. $Y_{ijh}^{uvl}$ is a binary variable to denote the order among two flows. $Y_{ijh}^{uvl} = 1$ means flow $j$ in coflow $i$ at hop $h$ precedes (or completes before) flow $v$ in coflow $u$ at hop $l$ and $Y_{ijh}^{uvl} = 0$ otherwise. $Ft_{ijh}$, the finish time of flow $j$ in coflow $i$ at hop $h$ is another decision variable that is derived based on the value of $X_{ijs}$ and $Y_{ijh}^{uvl}$.

\subsubsection{Constraints}

There are several constraints in the problem related to flow ordering across different hops in the path, the completion time of flow and coflows, and decision variables.  

\textit{Flow Ordering Constraints}: Equations (2)-(4) describe the constraints corresponding to the finish time of the flow at each hop. The problem considers that each flow can be ordered across different hops in the path. Equation (2) guarantees that for two different flows passing through the same network link, the finish time at the network hop of a flow is greater than the sum of data transmission at the network hop and the finish time at the network hop of the preceding flow.

\begin{align}
 & \begin{aligned}
& Ft_{ijh} + M*(2 - Y_{uvl}^{ijh} - X_{uvw}) >= T_{ijs}^h*X_{ijs} + \\
&\quad a_{ij}^{uv}*(Ft_{uvl}*L_{uvw}^{el}), \\
& \quad \quad \forall i, u\in N1, j \in K_i, v \in K_u, s \in S_{ij}, w \in S_{uv}, e \in P_{ijs}^h
\end{aligned}
\end{align}

The constraint in Equation (2) includes certain conditions: a) whether the two flows are the same or not, b) whether the other flow is preceding or not, and c) whether the two flows have a conflict (pass through the same link) otherwise the two flows can be transmitted simultaneously. Equation (2) uses a parameter $a_{ij}^{uv}$ to check whether the two flows are the same or not. $a_{ij}^{uv} = 1$ means flow $j$ in coflow $i$ is different from flow $v$ in coflow $u$and $a_{ij}^{uv} = 0$ otherwise. Another term in Equation (2) is $M*(2 - Y_{uvl}^{ijh} - X_{uvw})$ that determines whether the flow $u$ at hop $l$ in coflow $v$ precedes the flow $j$ in coflow $i$ at hop $h$. The third condition is checked using the parameter $L_{uvw}^{el}$ to decide whether link $e$ corresponds to transmission of flow $v$ in coflow $u$ from source $w$ at hop $l$. Here, link $e$ is the network link for transmission of flow $j$ at hop $h$ in coflow $i$ from source $s$. 

Each flow depends on the transmission of data across different hops in the path. For example, the transmission of flow at a hop can start only after the finish time of flow at the previous hop. Equation (3) shows this constraint by representing the relation between the finish time of flow at successive hops in the path. 

\begin{align}
& \begin{aligned}
& Ft_{ijh} \geq Ft_{ij(h-1)} + T_{ijs}^h*X_{ijs}, \\
& \quad \forall i \in N1, j \in K_i, s \in S_{ij}, h \in H_{ijs}
\end{aligned}
\end{align}

Equation (4) represents the lower bound on the finish time of flow at the first hop in the path as the sum of the data's release time and the cost of data transmission at the first hop. 

\begin{align}
& \begin{aligned}
&Ft_{ij1} \geq (Trel_{ijs} + T_{ijs}^1)*X_{ijs}, \quad \forall i \in N1, j \in K_i, s \in S_{ij}
\end{aligned}
\end{align}

\textit{ Coflow Constraints}: Equations (5) and (6) represent the constraints in the problem formulation corresponding to the completion time of flows and coflows. Equation (5) guarantees that flow completion time (FCT) is greater than the finish time of flow at any hop in the path. Similarly, the coflow completion time (CCT) is greater than the completion time of all flows within the coflow.

\begin{align}
& \begin{aligned}
&FCT_{ij} \geq Ft_{ijh}, \quad \forall i \in N1, j \in K_i, h \in H_{ijs}, s \in S_{ijs}
\end{aligned}\\
& \begin{aligned}
& CCT_i  \geq FCT_{ij},  \quad \forall i \in N1, j \in K_i
\end{aligned}
\end{align}

\textit{Decision Variable Constraints}: Equation (7)-(9) represents the constraints in the problem formulation corresponding to decision variables for selecting the source and order of flows. Equation (7) guarantees that exactly one source is selected for each flow. Equation (8) constraints variable values for flow order such that one flow has to precede the other at any hop. For example, if $Y_{ijh}^{uvl} = 0$ implying flow $j$ in coflow $i$ at hop $h$, does not precede the flow $v$ in coflow $u$ at hop $l$, then the latter flow should precede the former flow, i.e. $Y_{uvl}^{ijh} = 1$, or vice versa. Equation (9) represents that the range of decision variables, $X_{ijs}$ and $Y_{ijh}^{uvl}$, is binary.  

\begin{align}
& \begin{aligned}
& \sum_{s = 1}^{S} X_{ijs} = 1,  \quad \forall i \in N1,  j \in K_i
\end{aligned}\\
& \begin{aligned}
&Y_{ijh}^{uvl} + Y_{uvl}^{ijh} = 1,  \quad \forall i, u\in N1, j \in K_i, v \in K_u, \\ 
& \quad h \in H_{ijs},  s \in S_{ij}, l \in H_{uvw}, w \in S_{uv} \\
\end{aligned}\\
& \begin{aligned}
&X_{ijs}, Y_{ijh}^{uvl} = \{0, 1\},  \quad \forall i, u\in N1, j \in K_i, v \in K_u, \\ 
& \quad s \in S_{ij}, h \in H_{ijs}, l \in H_{uvw}, w \in S_{uv} \\
\end{aligned}
\end{align}

 \subsubsection{Optimization Problem}

The problem is defined as follows: Given the network model and coflow model, the objective of the problem is to minimize the sum of CCT for all edge devices requesting data by making the decision on the source and order of flows at each hop in the path. The multi-source coflow scheduling problem can be formulated as follows:

\begin{align}
& \underset{X_{ijs},Y_{ijh}^{uvl}, Ft_{ijh} }{\text{Minimize}} \sum_{i = 1}^{|N1|} CCT_{i}, \\
& \text{Subject to}: \quad \text{Equations } (2) - (9) \nonumber
\end{align}

In the objective function, $CCT_i$ expresses the CCT for flows corresponding to data requested by edge device $i$. The objective function is to minimize the sum of CCT for all edge devices requesting data. The problem is formulated as a mixed integer non-linear programming (NLP) problem, where $X_{ijs}$ and $Y_{ijh}^{uvl}$ are discrete decision variables with binary values, while $Ft_{ijh}$ is continuous decision variable with positive real value. 

\begin{theorem}
Problem \textbf{P1} is NP-hard
\end{theorem}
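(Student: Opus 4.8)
The plan is to prove NP-hardness by a polynomial-time reduction from the job-shop scheduling problem with the min-sum (total completion time) objective, denoted $J \mid\mid \sum_j C_j$, which is a well-known NP-hard problem. The strategy is to exhibit a restricted special case of Problem \textbf{P1} that already encodes an arbitrary JSSP instance; since solving a special case is no harder than solving the general problem, NP-hardness of the special case implies NP-hardness of \textbf{P1}.

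First I would fix the special case that strips away the source-selection and release-time degrees of freedom, leaving a purely sequencing problem. Concretely, I take every flow to have a single source ($|S_{ij}| = 1$), so Equation (7) forces $X_{ijs} = 1$ and the source variables become trivial; I set all release times $Trel_{ijs} = 0$; and I let each coflow $i \in N1$ contain exactly one flow. With one flow per coflow, Equations (5)--(6) collapse to $CCT_i = FCT_{ij} = \max_h Ft_{ijh}$, so the objective $\sum_i CCT_i$ equals the sum over flows of the finishing time of the last hop. Given a JSSP instance with jobs $J_1,\dots,J_n$, where $J_j$ is a chain of operations $o_{j,1},\dots,o_{j,n_j}$ and $o_{j,k}$ requires processing time $p_{j,k}$ on machine $m_{j,k}$, I build one coflow/flow per job. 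Each machine $m$ is realized as a single shared link between a dedicated pair of nodes, and the path $H_{ijs}$ of the flow for $J_j$ visits the machine links of $m_{j,1},\dots,m_{j,n_j}$ in order, with short private links (used by this one flow only) inserted between consecutive machine links so that the path is a valid walk in $G$. Processing times are encoded through bandwidths: setting $D_{ij}=1$ and the machine-link bandwidth $R_{ijs}^h = 1/p_{j,k}$ makes the per-hop cost $T_{ijs}^h = D_{ij}/R_{ijs}^h = p_{j,k}$ by Equation (1), while giving the private connectors very large bandwidth makes their cost negligible. The construction is clearly polynomial in the total number of operations.

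Next I would verify the equivalence in both directions. The precedence among operations within a job is exactly Equation (3) chained along $H_{ijs}$, the machine-capacity requirement that a machine runs one operation at a time is exactly the non-preemptive ordering imposed by Equations (2) and (8) on the shared machine link, and the private connectors create no conflicts since each is traversed by a single flow. Hence feasible coflow schedules correspond bijectively to feasible JSSP schedules, and under this correspondence $\sum_i CCT_i = \sum_j C_j$, so an optimal solution of the constructed \textbf{P1} instance yields an optimal $J \mid\mid \sum_j C_j$ schedule.

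The main obstacle I anticipate is not the scheduling equivalence, which follows almost directly from Equations (2), (3), and (8), but the network-realization step: I must argue that an arbitrary sequence of machines can be routed as a genuine path in a connected graph $G$ respecting $N1 \cap N2 = \emptyset$, that funnelling many flows through one common machine link produces conflicts only on that link and never on the connectors or at intermediate nodes (since the model imposes capacity only on links, not nodes), and that the per-hop bandwidth encoding $R_{ijs}^h$ introduces no unintended coupling between flows. Once this faithful embedding is established, NP-hardness of \textbf{P1} follows.
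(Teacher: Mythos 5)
Your proposal is correct and follows essentially the same route as the paper: a reduction from min-sum JSSP to the special case of \textbf{P1} with a single source per flow and a single flow per coflow, identifying jobs with coflows, operations with per-hop transmissions, machines with links, and processing times with per-hop transmission costs. The only difference is one of rigor, not of approach: you spell out the graph-embedding and bandwidth-encoding details (private connector links, $D_{ij}=1$, $R_{ijs}^h = 1/p_{j,k}$) that the paper's proof states only at the level of the correspondence.
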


\begin{proof}
We prove the formulated problem is NP-hard by reducing it from a well-known NP-hard problem, a job-shop scheduling problem (JSSP) with min-sum objective \cite{queyranne2002approximation} to a special instance of the current problem. 

We consider the special instance of the current problem where each flow has a single source, and each coflow consists of a single flow. Each job in JSSP corresponds to coflow in the special instance of the current problem, the operation in a job represents the transmission of flow across each hop, and the machine in JSSP is equivalent to the link used for transmitting the data. The processing time of operation in a job at the machine is equivalent to transmission time for flow at each hop. The processing time of operations is heterogeneous as each flow has a different amount of data to be transmitted. The different operations, i.e., data transmission across consecutive hops for a flow, have a precedence order. Each operation needs to be processed on a specific machine, and only one operation in a job can be processed at a given time. The objective is to determine the order of operations of jobs at machines (flow order at different hops) to minimize the sum of completion time of jobs (sum of CCT). Since JSSP with min-sum objective has been shown to be NP-hard \cite{queyranne2002approximation}, our problem is also NP-hard. 
\end{proof}

 \section{Source and Coflow-Aware Search and Adjust (SCASA) Heuristic} \label{s:scasa}
 
This section gives details of the SCASA heuristic to solve the multi-source coflow scheduling problem in CEC. SCASA gives an offline schedule for coflows that includes selecting each flow's source and deciding the order among conflicted flows at different hops/links in the network path to minimize the sum of CCT.  The main idea of SCASA is to find an initial solution for selected source and flow ordering at different links and then improve it using a source search and adjust heuristic. The source search and adjust heuristic modifies the source of flows and order among flows to minimize the sum of CCT. The details of each part of SCASA are explained in the following subsections.

\subsection{Initial Solution}

The initial solution includes first determining the source of all flows based on the rank metric and then deciding the priority order of flows at different hops in the network path. Each flow $j$ in coflow $i$ is given a rank, $Frank_{ij}$ as shown in equation (11). The source of each flow is decided independently, considering both the release time and the sum of transmission time at each hop in the path. The condition for deciding the source of each flow is to select one that leads to the minimum value of $Frank_{ij}$ metric, i.e, $\min_{s \in S_{ij}}{Trel_{ijs} + \sum_{h = 1}^{|H_{ijs}|}T_{ijs}^h}$. 

\begin{align}
& \begin{aligned}
&Frank_{ij} =  \min_{s \in S_{ij}}{Trel_{ijs} + \sum_{h = 1}^{|H_{ijs}|}T_{ijs}^h} \\
\end{aligned}\\
& \begin{aligned}
&Crank_{i} =  \max_{j \in K_i}{Frank_{ij}} \\
\end{aligned}
\end{align}

The order among flows is decided based on the knowledge of coflows. The main idea behind deciding the priority order is to consider the rank of coflows instead of independent flows. Here, the rank of each coflow $i$, $Crank_i$ shown in equation (12), is determined based on the maximum transmission time among all flows, where each flow is transmitted without considering network congestion. The different flows are then prioritized in ascending order of corresponding coflow rank, i.e., flows belonging to the higher priority coflow are given higher priority. Since different flows corresponding to the same coflow can also pass through the same network link, we also need to determine their priority order. The priority order among flows within the same coflow is determined based on ascending order of $Frank$.

Once the source and flow order at different hops is determined, the CCT and sum of CCT can be calculated using Algorithm \ref{a:cct}. Algorithm \ref{a:cct} starts by creating a directed graph, $G1 = (V1,E1)$, to represent flow ordering. Each vertex in $V1$ corresponds to flow passing through a hop a network path, referred to hereafter as subflow. An edge between two vertices represents the priority ordering of subflows, i.e., if one subflow needs to finish before the other. The weight of the edge is equal to the transmission time of the subflow starting earlier. The different vertices are connected using the flow ordering determined earlier. There is also an additional vertex to represent the release of all flow. This additional vertex is connected to the first subflow (first hop in the network path of flow) in each flow, where the weight of an edge is equal to the release time of flow. Each vertex is associated with a start and finish time that must be calculated. The start time and finish time for additional vertex corresponding to the release of flows are zero. The dependency ordering among different subflows is found using topological sorting of $G1$. The start time ($Sthop_h$) and finish time ($Fthop_h$) corresponding to each subflow vertex ($h$) are determined as shown in lines 3-9. The finish time for each flow, CCT, and the sum of CCT can be calculated accordingly, as shown in lines 10-18. 

\begin{algorithm}
\footnotesize
\caption{Calculate CCT}
\label{a:cct}
\KwInput{Selected source of each flow and precedence order among the flows at different hops}
\KwOutput{Finish time of each flow, CCT, and sum of CCT}
 \BlankLine
 Create directed graph, $G1$, to represent flow ordering\;
Get dependency order of subflows, $O$, using topological sort\;
 \For{$a \gets 2 \text{ to } |V1|$ }{
$ h \gets O(a)$\;
 \text{Get index of coflow $i$ and flow $j$ for subflow h}\;
$pd_{h} \gets \text{set of predecessors of $h$ in $G1$}$\;
$Sthop_{h} \gets \max_{p \in pd_{h}}(Sthop_{p} + T_{ijs}^h)$\;
$Fthop_{h} \gets Sthop_h + T_{ijs}^h$
 }
   \For{$i \gets 1 \text{ to } |N1|$ }{
   \For{$j \gets 1 \text{ to } |K_i|$ }{
  $Ft_{ij} \gets \max_{h \in H_{ijs}}(Fthop_h)$\;
  }
  }
  \For{$i \gets 1 \text{ to } |N1|$ }{
  $CCT_i \gets \max_{j \in K_i}(Ft_{ij})$\;
  }
  $SumCCT \gets sum(CCT_i)$\;
 \KwRet $Ft_{ij}$, $CCT_i$, $SumCCT$
\end{algorithm}

\subsection{Source Search and Adjust}

Since the initial solution selects each flow's source independently without considering network congestion due to other flows passing through the same network link, it can lead to inefficient performance. Therefore, the proposed source adjustment greedily changes each flow's source to minimize the sum of CCT while keeping the source of other flows unchanged. One main issue here is deciding the priority flow order for changing sources. Our solution orders the flows according to a metric such that flows contributing more towards network congestion are given higher priority for changing the initially selected source. Here, the metric used to quantify the network congestion for each flow is the number of links in the flow path. The intuition behind the priority ordering metric is that a flow passing through more links will lead to network congestion, i.e., delay in transmission time, for more flows. Hence, changing the source of a higher priority flow can decrease network congestion for more flows, consequently minimizing the sum of CCT.

\begin{algorithm}
\footnotesize
\caption{Source Search and Adjust}
 \label{a:source}
\KwInput{Initial Schedule with selected sources, order of flows, CCT, and the sum of CCT}
\KwOutput{New Schedule with selected sources, order of flows, CCT, and sum of CCT}%and information about each flow including start time, finish time, data, rate, source, and destination
 \BlankLine
Calculate the cost matrix of flows, $flcost$, where each row corresponds to the number of subflows in each flow, \;
Calculate the conflict matrix of flows, $flconf$, where each row corresponds to the number of conflicts with other flows\;
 \For{$i1 \gets 1 \text{ to } |N1*K|$ }{
$flidx \gets \max(flcost)$\; 
   \uIf{$flconf(flidx) \geq 1$}{ 
 Get index of coflow $i$, flow $j$, and source $s$ corresponding to $flidx$ \;
 $sumcct$ $\gets$ current sum of CCT using source $s$\;
  \For{$v \gets 1 \text{ to } S$ }{
  \uIf{$v == s$}{ 
  $sumcct_v \gets sumcct$\;
  }
 \Else{
Get new rank of coflow $i$ and flow $j$\;
Get new flow ordering based on new ranks using the same method as the initial solution\; 
Calculate $sumcct_v$ for new source $v$ using Algorithm \ref{a:cct}\;
}
  }
   \uIf{$sumcct == \min(sumcct_v)$}{ 
  Keep the source as initial $s$\;
   }
   \Else{
Set new source according to $\min(sumcct_v)$\;
  }
  Set the values of $sumcct$, rank of flows and coflows according to selected source\;
  Set row of $flidx$ in $flcost$ to zero\;
  Recalculate conflict matrix $flconf$\;
  }
  }
 \KwRet $X_{ijs}$, $\text{flow ordering graph}$, $Ft_{ij}$, $CCT_i$, $SumCCT$
\end{algorithm}

Algorithm \ref{a:source} shows the pseudocode of the proposed source search and adjust procedure. Given the initial schedule as input, Algorithm \ref{a:source} outputs a new schedule with newly selected sources, order of flows, CCT, and the sum of CCT. The algorithm starts by creating a cost matrix of flows $flcost$ and a conflict matrix of flows $flconf$ in lines 1 and 2. From lines 3 to 25, each flow is considered for changing sources based on the highest value in $flcost$ that corresponds to the flow with the maximum number of links in the path among the current flows not yet considered for changing sources. The selected flow, $flidx$, is further checked for conflict with other flows within the loop. If there is no conflict, i.e., no other flow passes through the links in the path of $flidx$, then there is no need to change the flow source as the initial solution already selected the source with minimum transmission time. For flow with network conflicts, the sum of CCT is calculated for different sources, i.e.. $sumcct_v$, using Algorithm \ref{a:cct}. The flow ordering input to calculate $sumcct_v$ using Algorithm \ref{a:cct} is based on the coflow and flow rank metrics method, similar to the initial solution. The source is kept unchanged if the minimum value of $sumcct_v$ is the same as the initially selected source. Otherwise, the source with a minimum value of $sumcct_v$ is selected for $flidx$. The values of $sumcct$, the rank of flows, and coflows are set accordingly. The cost matrix, $flcost$, and conflict matrix, $flconf$, are also updated. The process of changing the source of flows is iterated until all flows have been considered.

\subsection{Complexity Analysis}

There are two main parts in calculating the computation complexity of SCASA, i.e., the initial solution and the source search and adjust heuristic. 

First, the computation complexity of the initial solution is $\mathcal{O}(N1^2*K^2*H)$, which is determined based on the Calculate CCT shown in Algorithm \ref{a:cct}. There is one for loop in Algorithm \ref{a:cct} corresponding to the number of vertices in graph $G1$, which is $N1*K*H$ in the worst case. Here $H$ is the maximum number of hops in the routing path of the given network model $G$. Furthermore, another loop corresponds to the number of predecessors in line 7, i.e., $N1*K$ in the worst case. 

Second, the computation complexity of the source search and adjust heuristic shown in Algorithm \ref{a:source} is $\mathcal{O}(N1*K*S*(N1^2*K^2*H))$. There is one for loops of $N1*K$ corresponding to the number of flows and another loop of $S$ for the number of sources. The calculation of the sum of CCT in line 14 contributes most towards computation complexity, i.e., $N1^2*K^2*H$. 

Since the source adjustment heuristic contributes most towards computation complexity, the overall computation complexity of SCASA is $\mathcal{O}(N1*K*S*(N1^2*K^2*H))$.

 \section{Performance Evaluation}
 
We have done simulation experiments to evaluate and compare the performance of SCASA with other benchmark solutions. The evaluation has been done in terms of the sum of CCT averaged over different iterations as the performance metric. The parameters used for simulation are in a similar range to the one used previously in \cite{sundar2018offloading} and \cite{yang2013task}. We have conducted the simulation experiments on a Macbook laptop with 32 GB RAM and an Apple M2 pro processor with 10 cores.

\subsection{Simulation Parameters} \label{s:param}

\textit{Parameters for Network Model}: We generate a network of edge devices where devices are deployed randomly using a uniform distribution. The size of the area is selected to be $N\times N$ \text{square units}, and any two devices less than $2*N/10$ \text {units} apart are connected. The distance between devices is set to be in a similar range as done in previous works \cite{yang2013task}, \cite{sahni2018data}. A variable area size makes creating a connected mesh network topology easier even with a low number of devices\cite{sahni2020multi}. The devices are connected using a multi-hop path to form a connected graph. The bandwidth of each link is selected from a normal distribution with a mean of 20Mbps and a variance of 30\%.

\textit{Parameters for Coflow Model}: The total number of devices in the network is set as 40 in the default case, with the number of devices requesting data, i.e., coflows, set as 20 and the remaining devices set for generating data. The number of flows in a coflow is set to be 3 in the default case. There are 3 source alternatives for each flow in the default case. The amount of data is selected from a normal distribution with a mean of 2Mb and variance of 30\%. The release time of data is selected from a normal distribution with a mean $T_{avg}*\frac{D_{avg}}{B_{avg}}$ and a variance of 30\%. Here, $D_{avg}$ is the mean amount of data, $B_{avg}$ is the mean bandwidth of the link, and $T_{avg}$ is a scaling parameter set to 1 in the default case.

 \subsection{Benchmark Solutions} \label{s:benchmark}
 
 We have compared the proposed solution SCASA with six different benchmark solutions:
 
\begin{enumerate}
\item RANDOM: RANDOM solution implies that the flow source selection and order are determined randomly. 
The computation complexity of RANDOM is based on calculating the sum of CCT, i.e., $\mathcal{O}(N1^2*K^2*H)$. 
\item RELAXED: RELAXED solution involves solving the relaxed LP problem using a Mosek solver in CVX and then discretizing the fractional decision variables. 
\item Flow Scheduling (FLS): FLS includes independently selecting each flow's source using the method in the initial solution in SCASA. The order of flows is determined based on the priority metric in ascending order of the sum of release time and transmission cost using the selected source. The computation complexity of FLS is based on calculating the sum of CCT, i.e., $\mathcal{O}(N1^2*K^2*H)$. 
\item Coflow Scheduling (CFLS): CFLS uses the method in the initial solution in SCASA to determine both the source and flow order. Compared to FLS, CFLS uses knowledge of other flows within the coflow to determine the order, similar to the smallest-effective-bottleneck-first (SEBF) approach proposed in \cite{chowdhury2014efficient}. The computation complexity of CFLS is based on calculating the sum of CCT, i.e., $\mathcal{O}(N1^2*K^2*H)$. 
\item Bottleneck aware scheduling (BAS): BAS uses the same method in the initial solution in SCASA to determine the source of flows. The order among flows is determined based on the 2-approximation solution proposed in \cite{luo2016towards}. The computation complexity of BAS is based on calculating the sum of CCT, i.e., $\mathcal{O}(N1^2*K^2*H)$. 
\item Flow ordering-based heuristic (FLORD): FLORD selects the source of flows using the method in the initial solution in SCASA and then uses the flow adjustment that searches and iterates across network links to modify the flow ordering at network links, similar to shifting bottleneck heuristic \cite{adams1988shifting}. The computation complexity of FLORD is based on the flow adjustment heuristic in SCASA, i.e., $\mathcal{O}(E^2*(N1^2*K^2*H))$.

\end{enumerate}

\subsection{Comparison in Default Parameter Setting}

\begin{table*}
\caption{Performance Comparison in Default Parameter Setting}\label{t:default}
\begin{center}
 \begin{tabulary}{1.05\textwidth}{ @{} LLLLLLLLL @{}m{0pt}@{}}
  \toprule
  \textbf{Metric} & \textbf{RANDOM} &  \textbf{RELAXED} & \textbf{FLS}  & \textbf{CLFS} &  \textbf{BAS} &  \textbf{FLORD} &  \textbf{SCASA}\\
 \midrule
Sum of CCT (seconds) & 354.5 $\pm$ 155.1 & 148.7 $\pm$ 81.2 & 72.9 $\pm$ 30.4 & 65.7 $\pm$ 24.2 & 69.2 $\pm$  24.1 & 81.0 $\pm$  37.9 & 57.8 $\pm$ 19.7\\
Running time (seconds) & 0.2 $\pm$ 0.08  & 1.11E+05 $\pm$ 1.4E+04 & 0.09 $\pm$ 0.03& 0.08 $\pm$ 0.02 & 0.07 $\pm$ 0.02 & 26.3  $\pm$ 5.13& 9.02 $\pm$ 3.10 \\
 \bottomrule
  \end{tabulary}
\end{center}
\end{table*}

Table \ref{t:default} shows the performance comparison for default parameters, where the number of edge devices in the network is 40, the number of coflows is 20, the number of flows within each coflow is 3, and the number of sources for each flow is 3. We have averaged the results by conducting the experiment for 30 iterations, and the error margin is calculated for a 95\% confidence interval. It can be observed that SCASA leads to the best performance in terms of reducing the value of the sum of CCT. SCASA leads to a reduction in the value of the sum of CCT by 83.8\%, 61.4\%, 21.3\%, 12.6\%, 17.1\%, and 29.1\% compared to RANDOM, RELAXED, FLS, CFLS, BAS, and FLORD respectively. 

RANDOM performs the worst of all the benchmarks as it randomly selects the source and flow order. RELAXED performs better than RANDOM; however, it also loses a lot of information in discretizing the obtained fractional output of variables, leading to worse performance. The four other benchmarks, FLS, CFLS, BAS, and FLORD, use different approaches to find the source and flow ordering without considering the adjustment of the source, leading to worse performance than SCASA. It shows that the proposed source search and adjust heuristic in SCASA contributes towards performance improvement. 

While SCASA leads to better performance in terms of reducing the value of the sum of CCT, it has a much higher running time compared to benchmarks, FLS, CFLS, and BAS, which do not consider any adjustment after the initial solution. However, SCASA has a lower running time than RELAXED and FLORD. There is a tradeoff between the two metrics, the sum of CCT and running time, for different algorithms. It should be noted that the running time of algorithms depends on the available processing resources and algorithm implementation. Since the focus of the evaluation was to verify the efficacy of the proposed solution, an efficient implementation of the solution would be able to significantly reduce the running time.

\subsection{Comparison with Random and Relaxed solution}
Fig \ref{f:randrelax} shows the performance comparison of SCASA with two benchmark solutions, RELAXED and RANDOM, by changing the number of devices in the network and coflows. We have fixed the ratio of the number of devices in the network to coflows as 2:1. The number of devices increased from 10 to 40, while the number of coflows increased from 5 to 20. The number of flows within each coflow is set to 3, and the number of sources for each flow is set to 3. The value of the sum of CCT has been averaged over 30 iterations. 

\begin{figure}[h]
\centering
\includegraphics[scale=0.35]{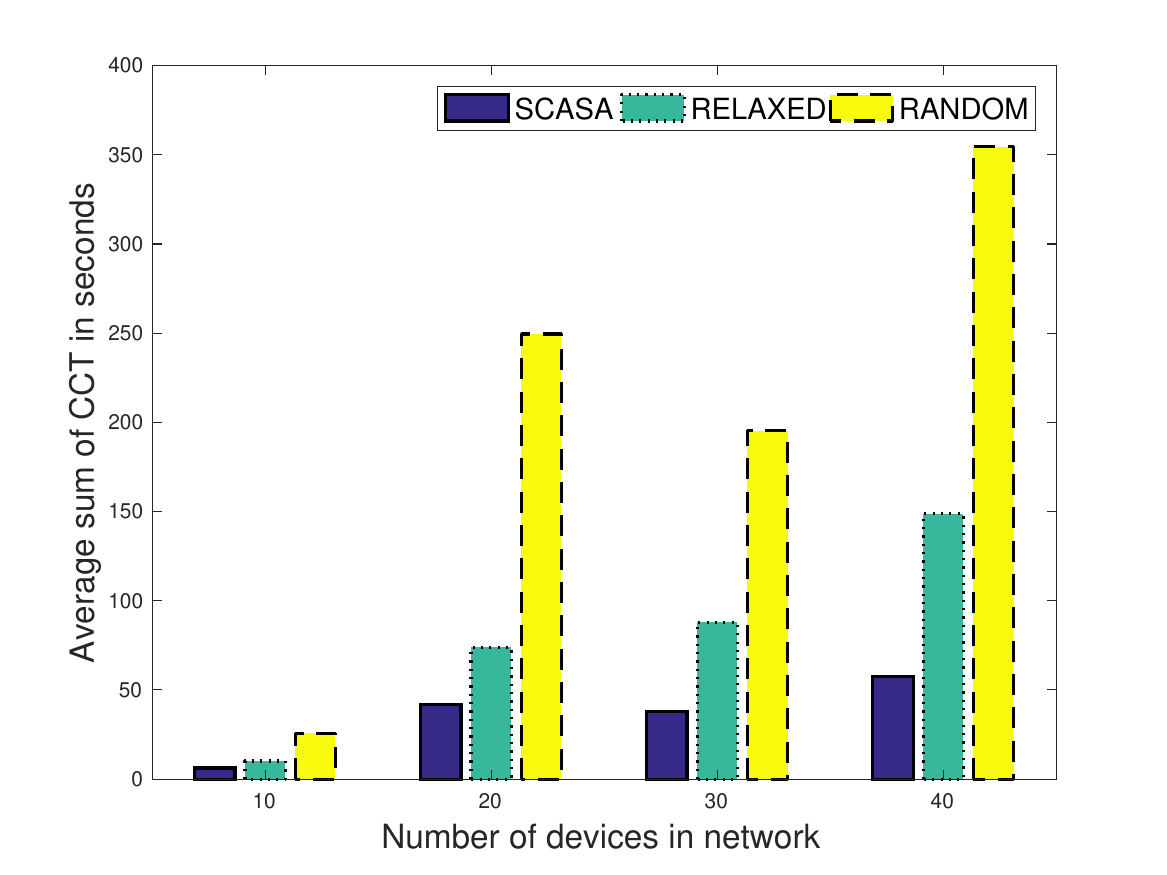}
\caption{Comparison by changing the number of devices in network and coflows}
\label{f:randrelax}
\end{figure}

It can be observed from Fig \ref{f:randrelax} that SCASA performs significantly better than both RELAXED and RANDOM solutions. The value of the average sum of CCT by SCASA is at least 41.5\% and 75.3\% less compared to RELAXED and RANDOM, respectively. Furthermore, the performance difference increases on increasing the number of devices and coflows. The performance difference between SCASA and RELAXED increases from 41.5\% to 61.4\% on increasing the number of devices from 10 to 40. Similarly, there is an increase in the performance difference between SCASA and RANDOM from 75.3\% to 83.4\% on increasing the number of devices from 10 to 40.

\subsection{Comprehensive Comparison with Benchmark Solutions}

This subsection shows the performance comparison of SCASA against four different benchmark solutions, i.e., FLS, CFLS, BAS, and FLORD. RANDOM and RELAXED are not compared as their performance is significantly worse than SCASA, as shown in Fig  \ref{f:randrelax}. Furthermore, RELAXED has a very high running time compared to SCASA. Unless stated otherwise, the default values used for performance comparison are: number of devices in the network is set to 40, number of coflows is set to 20, number of flows in each coflow is set to 3, number of sources for each flow is set to 3, and scaling parameter $T_{avg}$ of release time is set to 1. The value of the sum of CCT is averaged over 30 iterations.

 \begin{figure*}[!h]
 \begin{subfigure}[t]{0.33\textwidth}
    \centering
    \includegraphics[width=1\textwidth]{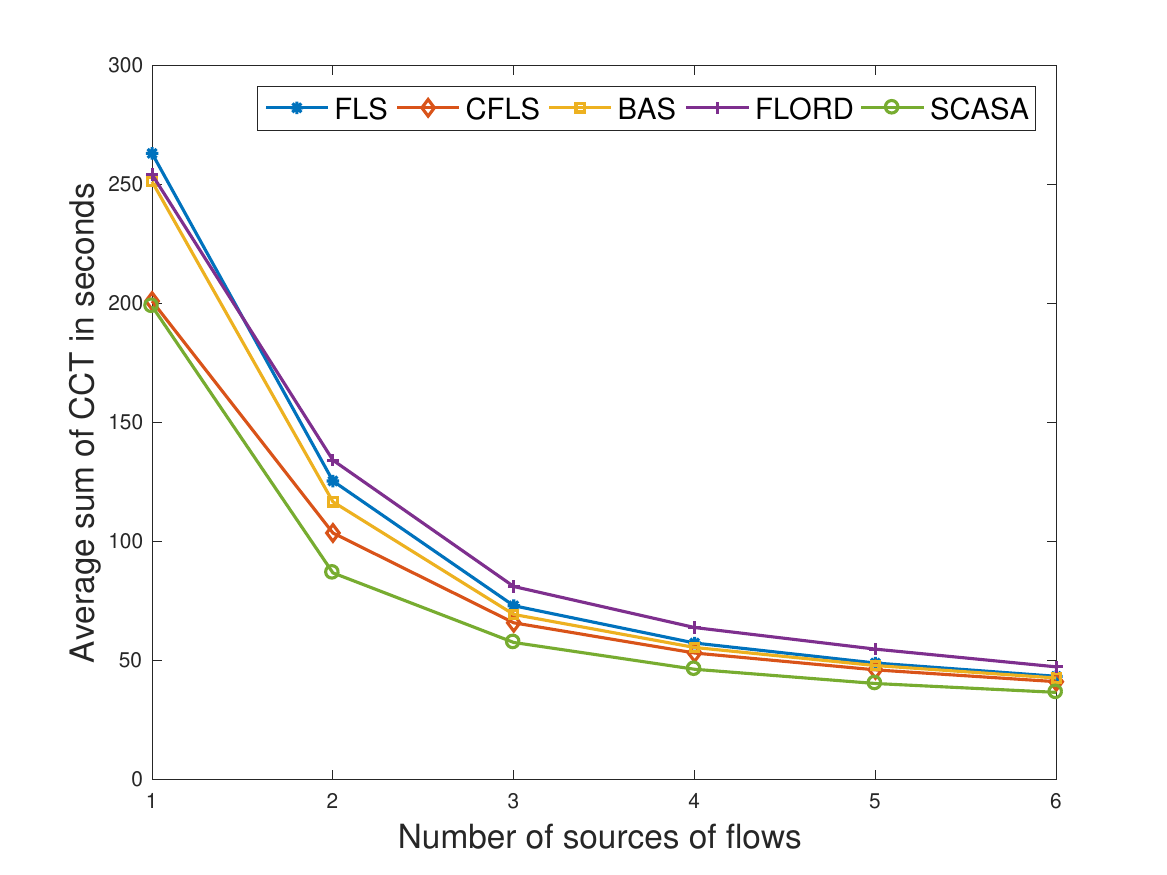}
    \caption{Changing number of sources}
    \label{f:sources}
\end{subfigure}  
~
 \begin{subfigure}[t]{0.33\textwidth}
    \centering
    \includegraphics[width=1\textwidth]{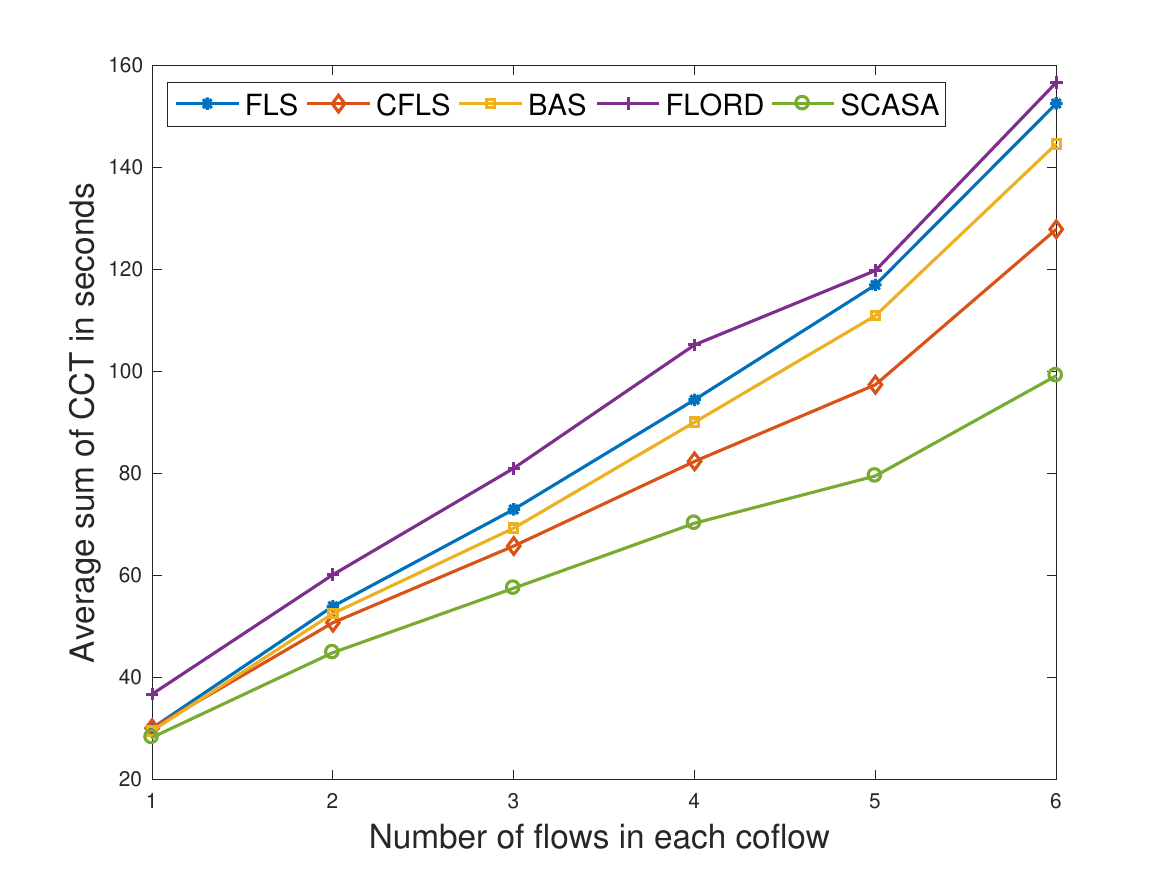}
    \caption{Changing number of flows}
    \label{f:flows}
\end{subfigure}  
~
 \begin{subfigure}[t]{0.33\textwidth}
    \centering
    \includegraphics[width=1\textwidth]{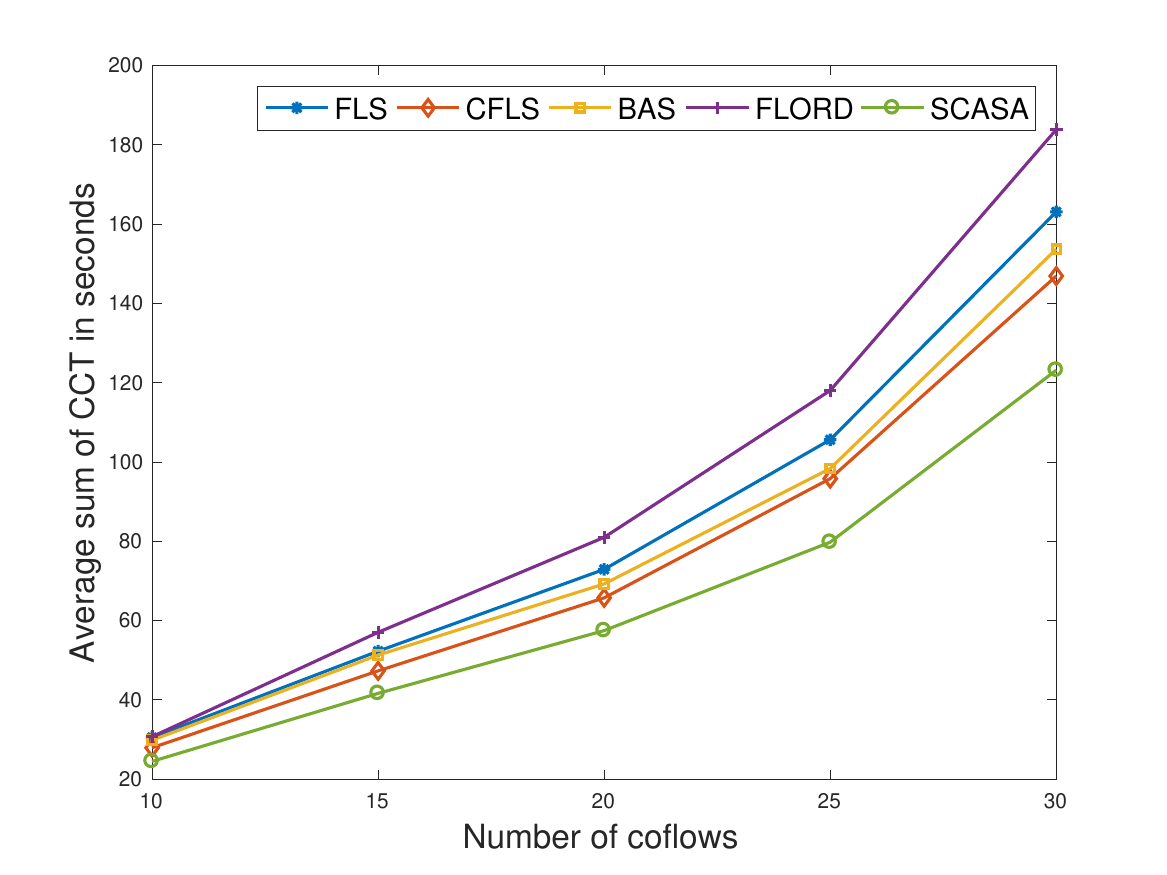}
    \caption{Changing number of coflows}
    \label{f:coflows}
\end{subfigure}

 \begin{subfigure}[t]{0.33\textwidth}
    \centering
    \includegraphics[width=1\textwidth]{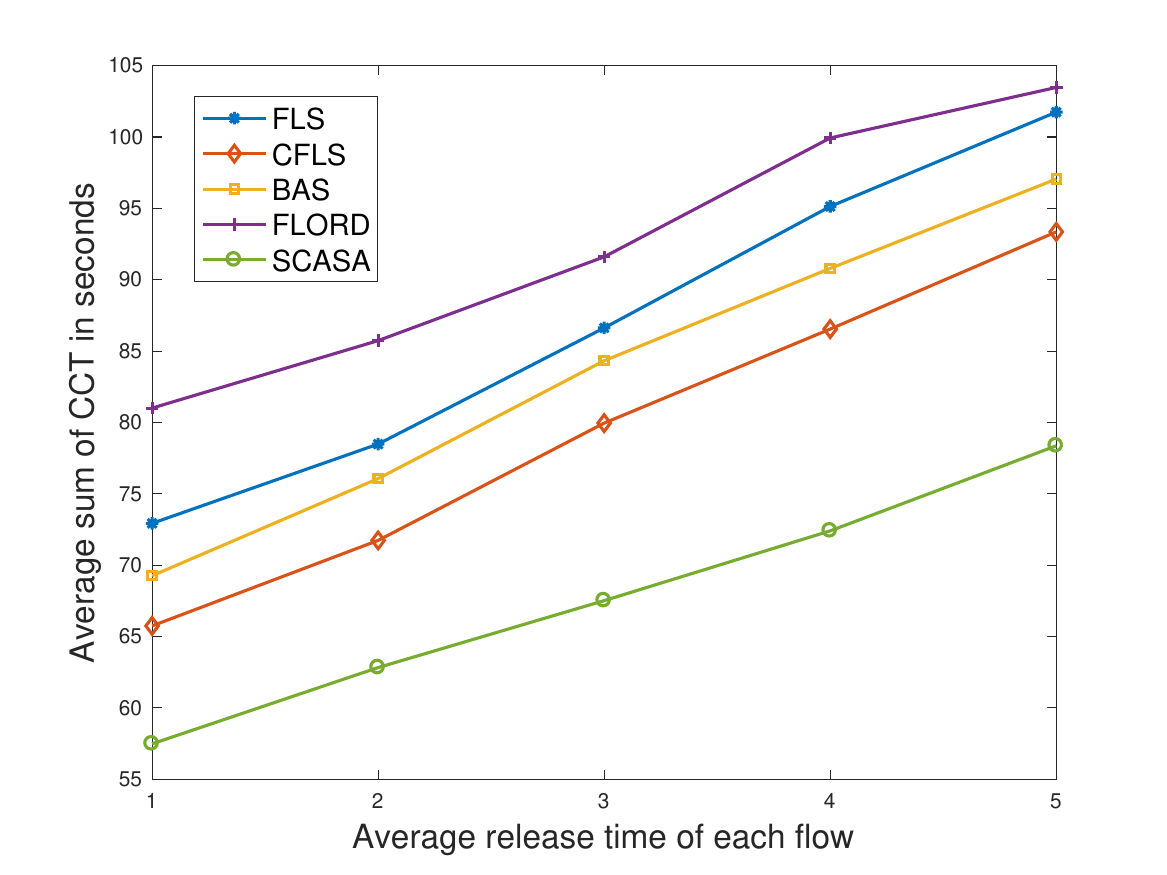}
    \caption{Changing average release time of data}
    \label{f:release}
\end{subfigure}   
~
 \begin{subfigure}[t]{0.33\textwidth}
    \centering
    \includegraphics[width=1\textwidth]{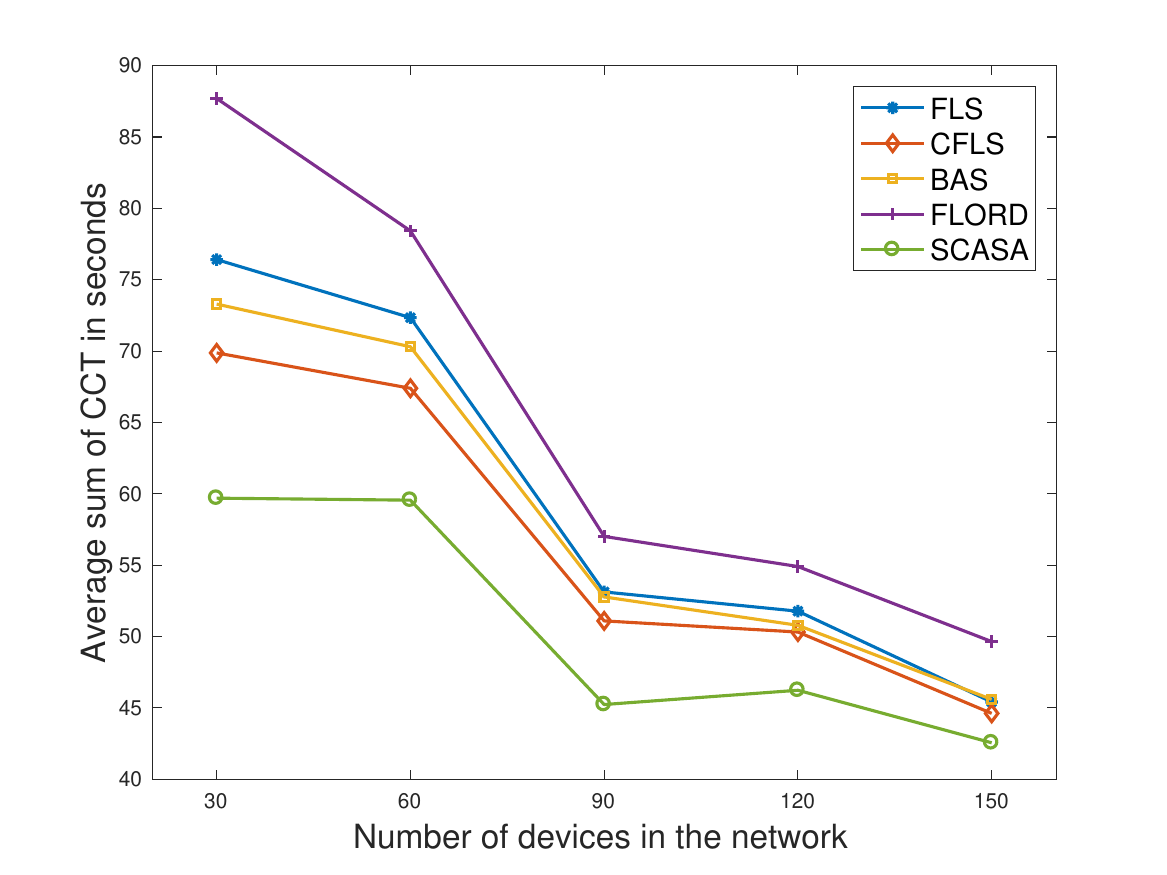}
    \caption{Changing number of devices in network}
    \label{f:devices}
\end{subfigure}   
~  
 \begin{subfigure}[t]{0.33\textwidth}
    \centering
    \includegraphics[width=1\textwidth]{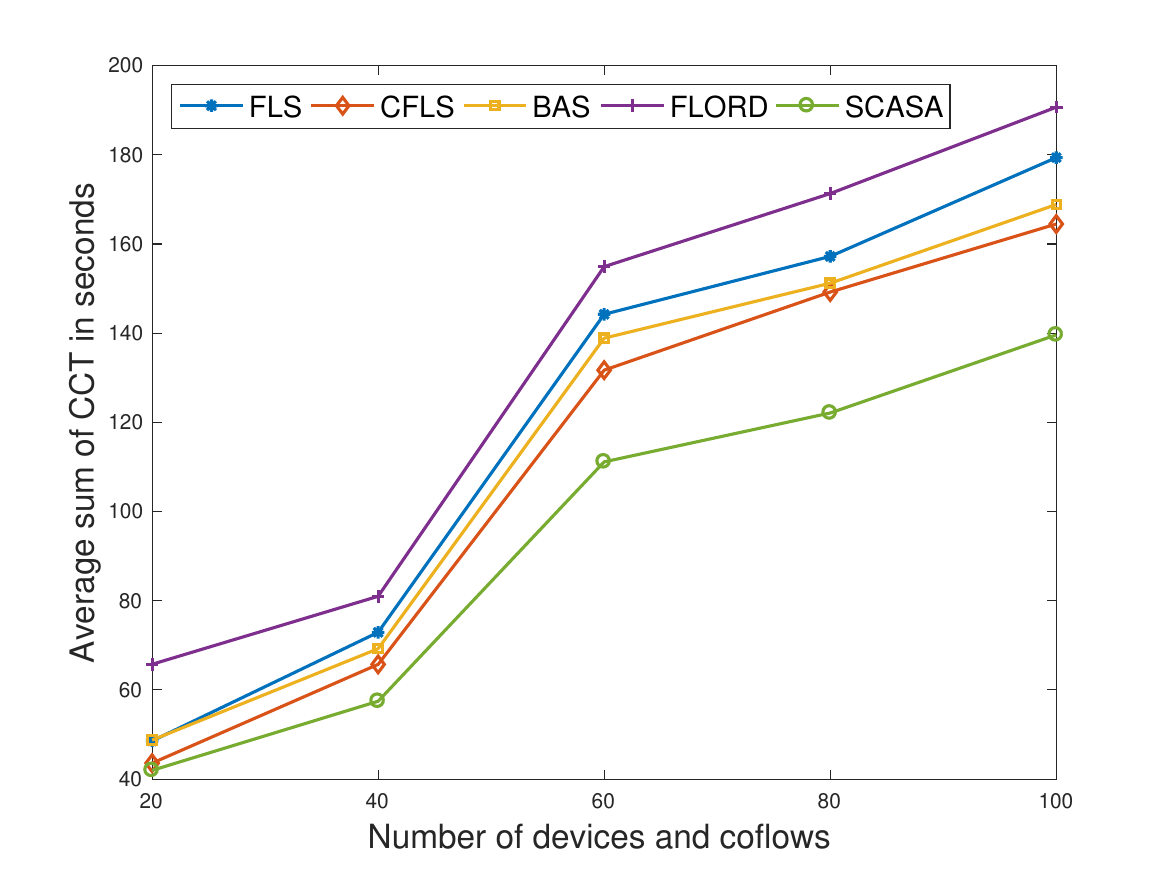}
    \caption{Changing both number of devices and coflows with ratio as 2:1}
    \label{f:devicesreqnet}
\end{subfigure}   
 \caption{Performance comparison with benchmark solutions}
\label{f:devcof}
\end{figure*}

\subsubsection{Changing number of sources}

Fig \ref{f:sources} shows the performance comparison between SCASA and four benchmark solutions by changing the number of sources for each flow from 1 to 6. It can be observed that SCASA leads to better performance, i.e., a lower value of the average sum of CCT, compared to all benchmark solutions for all values of the number of sources. The performance difference between SCASA and other benchmark solutions is up to 30.8\%, 16.2\%, 25.6\%, and 35.3\% compared to FLS, CFLS, BAS, and FLORD, respectively. SCASA performs better due to a better initial solution and the proposed source search and adjust heuristic that considers the network conflicts. The performance difference is decreased to 15.60\%, 11\%, 14.20\%, and 22.70\% compared to FLS, CFLS, BAS, and FLORD, respectively, on increasing the number of sources to 6. The decrease in performance difference when increasing the number of sources is due to better source selection, leading to the reduced value of the sum of CCT. As the number of sources increases, there are fewer conflicts at the network links as sources closer to devices requesting data are selected. 

\subsubsection{Changing number of flows}

Fig \ref{f:flows} shows the performance comparison between SCASA and benchmark solutions on changing the number of flows within each coflow from 1 to 6. As the number of flows increases, the network congestion at the links also increases. Therefore, the increase in the number of flows increases the performance difference between SCASA and other benchmark solutions. The performance difference between SCASA and FLS increases from 6\% at 1 flow to 34.9\% at 6 flows. For CFLS, the performance difference increases from 6\% to 22.5\% on increasing from 1 to 6 flows within each coflow. Similarly, there is an increase in performance difference from 3.6\% to 31.4\% for BAS and from 23.1\% to 36.7\% for FLORD. Among different benchmark solutions, both BAS and CFLS perform better as they consider the knowledge of coflows to determine flow order. However, both CFLS and BAS perform worse than the proposed SCASA as it further improves the performance by using source search and adjust heuristic to adjust the source selection and flow ordering at network links.

\subsubsection{Changing number of coflows}

Fig \ref{f:coflows} shows the performance comparison between SCASA and benchmark solutions on increasing the number of coflows from 10 to 30. Similar to the observation in Fig \ref{f:flows}, the increase in the number of coflows leads to more network congestion. Since SCASA adjusts the selected sources and flow ordering based on network congestion, it performs better than different benchmark solutions. The performance difference between SCASA and FLS increases from 19.8\% at 10 coflows to 24.5\% to 30 coflows. Similarly, the performance difference increases from 12.3\% to 16.7\% for CFLS, 17.9\% to 19.9\% for BAS, and 19.3\% to 33\% for FLORD.

\subsubsection{Changing average release time of flows}

SCASA has also been compared with other benchmark solutions by changing the average release time of flows, as shown in Fig \ref{f:release}. We have changed the scaling factor $T_{avg}$ of release time from 1 to 5. SCASA performs better than different benchmark solutions, even for different values of release time of flows. The increase in the average release time of flows impacts the value of the sum of CCT, thereby making the performance difference slightly larger. The performance difference between SCASA and FLS increases slightly from 21.2\% to 23.9\%. Similarly, there is a slight increase in performance difference for CFLS from 12.6\%  to 16.3\% and for BAS from 17\% to 24.4\%. The performance difference between SCASA and FLORD stays in the same range, around 25\% to 29\%, on changing the release time.

\subsubsection{Changing number of devices in network}

Fig \ref{f:devices} shows the performance comparison between SCASA and benchmark solutions on increasing the number of devices in the network from 30 to 150. The performance difference between SCASA and benchmark solutions becomes smaller when the network devices increase. It is expected as the number of coflows and flows will be kept constant while increasing the number of devices, leading to fewer network conflicts and a decreased value sum of CCT. The performance difference between SCASA and FLS decreases from 21.9\% at 30 devices to 6.3\% at 150 devices. Similarly, there is a decrease in performance difference from 14.6\% to 4.6\% for CFLS, 18.6\% to 6.7\% for BAS, and 31.9\% to 14.3\% for FLORD.

\subsubsection{Changing number of devices in network and coflows}

In previous comparisons, either the number of devices in the network or the number of coflows are kept constant. We have also compared SCASA with benchmark solutions by changing both the number of devices in the network and the number of coflows, as shown in Fig \ref{f:devcof}. The ratio of the number of devices in the network to the number of coflows is kept at $2:1$. The number of devices in the network is increased from 20 to 100 while changing the number of coflows from 10 to 50 correspondingly. It is observed that the performance difference between SCASA and benchmark solutions increases for FLS and CFLS and stays in the same range for BAS and FLORD. The performance difference between SCASA and FLS increases from 13.6\% to 22.2\%, while for CFLS, it increases from 3.7\% to 15.1\%. Since both FLS and CFLS do not consider network conflicts while deciding the source and flow order, the increase in network congestion leads to worse performance than SCASA. Even though BAS and FLORD consider network conflicts, they do not change the source selection, leading to inefficient flow ordering and worse performance than SCASA. The performance difference between SCASA and BAS stays around 13.9\% to 19.2\%, while for FLORD, it is 27\% to 36\%.

\subsection{Comparison with SCASA+FLORD}

We have also compared SCASA with SCASA+FLORD, a modified version of SCASA that also considers flow adjustment at individual links. SCASA+FLORD leverages the flow adjustment in FLORD after the source search and adjust heuristic in SCASA to improve performance. Our experiments show that we obtain a similar level of performance by SCASA+FLORD compared to SCASA with a much longer running time.  Fig \ref{f:scas+flord} shows the comparison conducted for the default parameter setting for 30 iterations. Overall, we can observe that for many iterations, both SCASA and SCASA-FLORD give the same result, which implies that flow adjustment did not lead to performance improvement in such cases. However, in some iterations, we observe that SCASA+FLORD leads to around 5\% or even more performance improvement in the value in the sum of CCT. Flow adjustment can provide some benefits depending on the parameter setting; however, it also significantly increases running time. 

 \begin{figure}[!h]
 \begin{subfigure}[t]{0.23\textwidth}
    \centering
\includegraphics[width=1\textwidth]{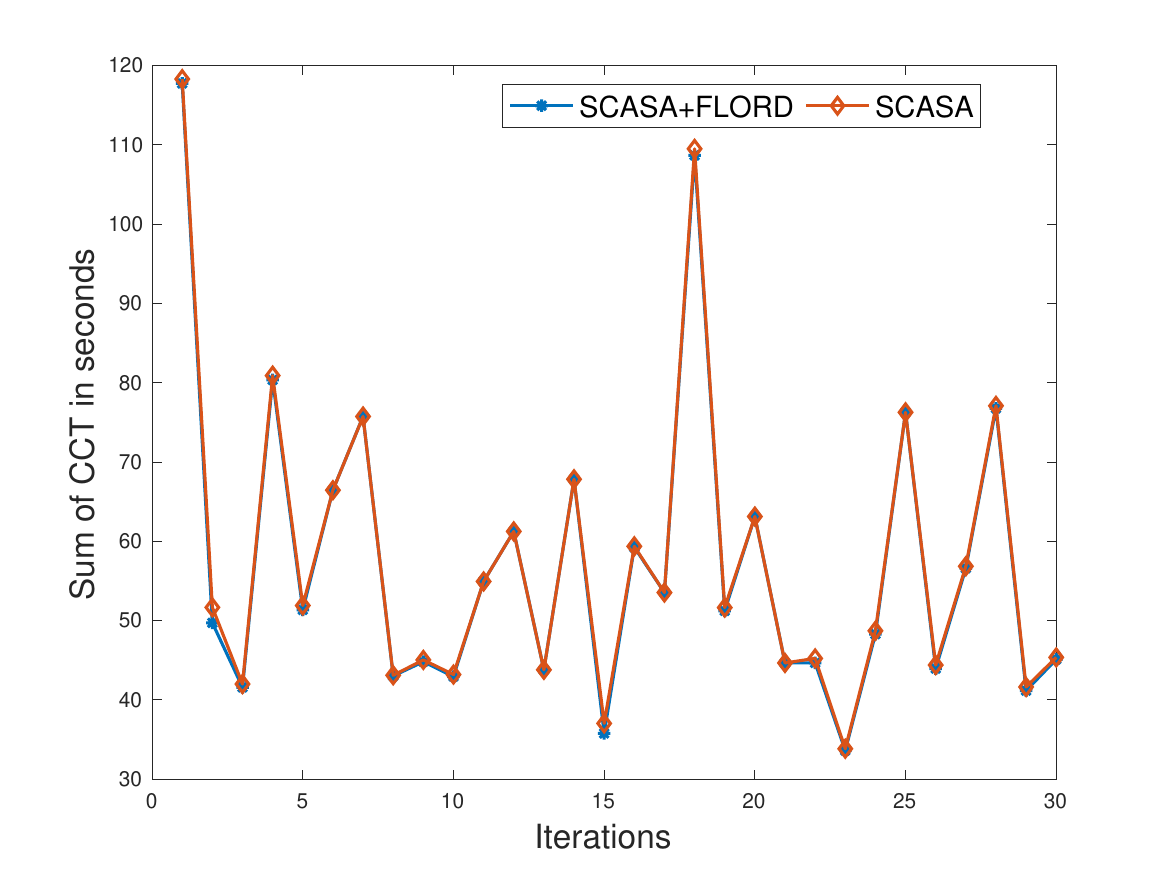}
\caption{Performance comparison with SCASA+FLORD for 40 devices in network}
\label{f:scas+flord}
\end{subfigure}  
~
 \begin{subfigure}[t]{0.23\textwidth}
    \centering
\includegraphics[width=1\textwidth]{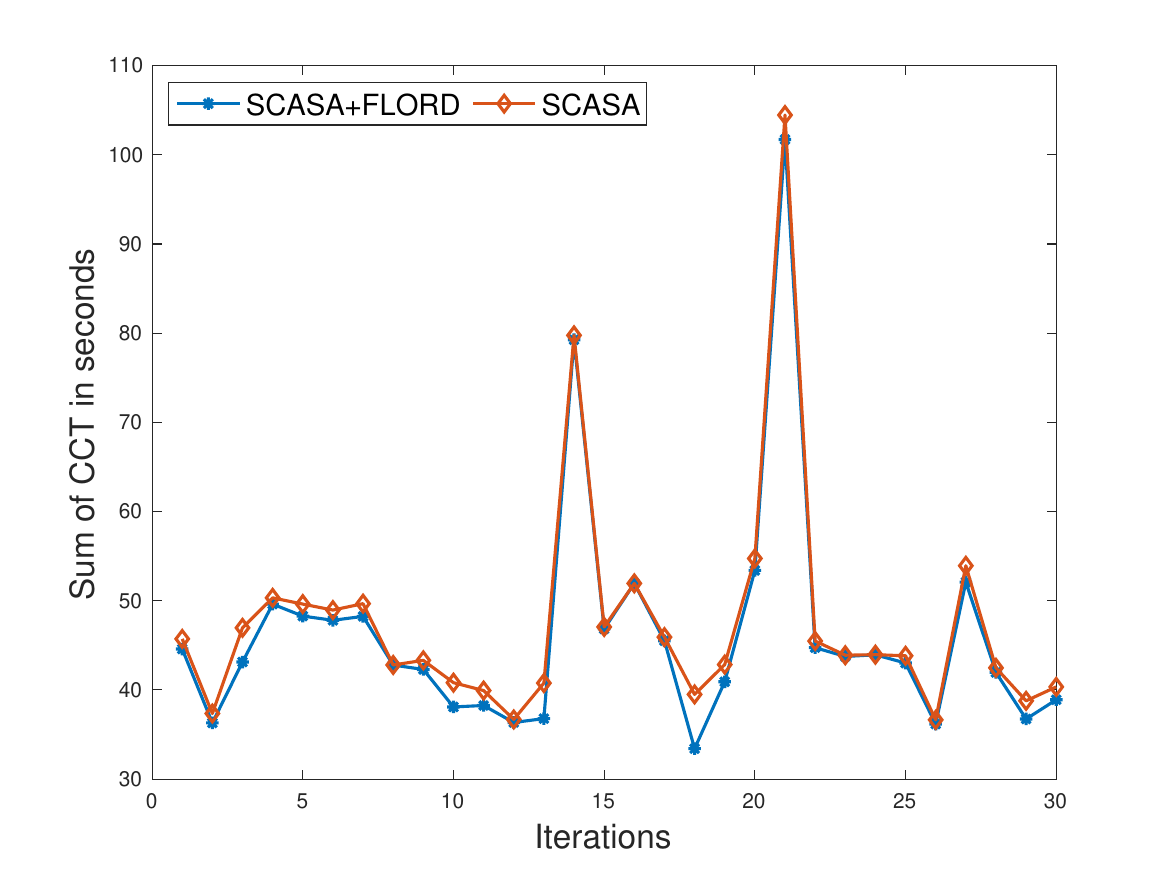}
\caption{Performance comparison with SCASA+FLORD for 120 devices in network}
\label{f:scasa+flord120}
\end{subfigure}  
 \caption{Performance comparison with benchmark solutions}
\label{f:devcof}
\end{figure} 

We have also done the comparison by increasing the number of devices in the network from 40 to 120, as shown in Fig \ref{f:scasa+flord120}. It is observed that the performance difference between SCASA-FLORD and SCASA increases when the number of devices in the network is increased. The performance difference is 8.89\% at the 90th percentile and can even go up to 18\% in some iterations. The increase in performance difference can be explained by the additional benefit brought by flow adjustment, especially in large networks with more links having dependent conflicts among flows that can be better resolved by considering link level adjustments instead of global flow ordering. In terms of running time, SCASA is significantly faster than SCASA+FLORD since no flow adjustment is required. The running time of SCASA is around 73\% less compared to SCASA+FLORD. 

\section{Related Works}

We classify the related works into three categories: coflow scheduling in data centers, network-aware task scheduling in edge computing, and data dissemination and offloading in edge computing. 

Coflow scheduling problem has been studied extensively in literature in the context of data centers \cite{chowdhury2014efficient} \cite{qiu2015minimizing}  \cite{chen2016optimizing} \cite{luo2016towards} \cite{zhou2019fast} \cite{ahmadi2020scheduling} \cite{chowdhury2019near} \cite{xu2020scheduling}. Most of these works assume the network model of a non-blocking switch connecting the devices, with network congestion only at ingress and egress ports. Varys \cite{chowdhury2014efficient} is one of the initial works to solve coflow scheduling in data centers by proposing the Smallest-Effective-Bottleneck-First (SEBF) heuristic to decide the order among coflows and Minimum-Allocation-for-Desired-Duration (MADD) for bandwidth allocation. Later, several works have proposed approximation algorithms for coflow scheduling \cite{qiu2015minimizing} \cite{zhang2019coflow} \cite{jahanjou2017asymptotically} \cite{shafiee2018improved} \cite{ahmadi2020scheduling} \cite{mao2021npscs}. The problem of scheduling dependent coflows has also been studied in \cite{shafiee2021scheduling} \cite{tian2019scheduling} \cite{tian2018scheduling}.  The work in \cite{wang2018multi} proposed multiple-attributes-based coflow Scheduling (MCS) mechanism for information-agnostic coflow scheduling. The work in \cite{xu2018optimizing} proposes Lever to optimize the cost-performance tradeoff for coflows in the geo-distributed data center.  

Few works have also considered scheduling coflows in general network topologies with bandwidth constraints \cite{li2016efficient}  \cite{chowdhury2019near} \cite{zeng2021scheduling}. The work in \cite{li2016efficient} proposed an efficient online solution for the multiple online coflow routing and scheduling problem. Another work in \cite{chen2018multi}  also studied multi-hop coflow routing and scheduling in the data center with leaf-spine topology. Related work has also studied the joint routing and bandwidth allocation problem for coflow scheduling \cite{shi2021coflow}. The work in \cite{chiang2021information} has also solved the grant assignment and transmission scheduling problem considering freshness information among coflows. Some recent works have also considered integrating coflow and circuit scheduling for optical networks \cite{wang2018integrating} \cite{zhang2020minimizing} \cite{tan2021regularization}. Coflow-aware job scheduling has also been studied in \cite{li2022co}, where a co-scheduler has been proposed for hybrid electrical/optical data center networks. Another work in \cite{huang2017exploiting} formulated a coflow placement problem and proposed a heuristic for selecting endpoints of coflows. Joint coflow placement and scheduling problems have also been studied in literature \cite{tan2019joint} \cite{zhao2020joint} \cite{li2020endpoint}.

Another category of work is network-aware task scheduling in edge computing, which considers the scheduling of underlying network flows (including routing, start time, and bandwidth allocated) to make task scheduling or computation offloading decisions. Some recent works such as \cite{funai2019computational}, \cite{al2016distributed}, \cite{hong2019qos}, \cite{hong2019multi}, etc. have addressed multi-hop computation offloading problem where they consider the routing path selection in multi-hop networks. Other works, such as \cite{munir2016network}, \cite{pu2015low}, \cite{rupprecht2017squirreljoin}, etc., have considered network bandwidth while making task scheduling decisions. The work in \cite{guo2019joint} considers the optimization of task placement and routing to minimize the coflow completion time. The work in \cite{wu2021joint} proposed a solution to jointly optimize the task placement, coflow bandwidth scheduling, and path choice for minimizing the average CCT in the intra-data center. Another work in \cite{chiang2019joint} studied joint cotask aware offloading and scheduling in mobile edge computing systems. Here, cotask is defined, similarly to coflows, where each job consists of a set of parallel tasks. In our previous works, we have also studied joint network flow scheduling and task scheduling problems for different application models, including multiple dependent tasks \cite{sahni2018data}, multiple independent tasks \cite{sahni2020multi}, and multiple directed acyclic graph (DAG) tasks \cite{sahni2020hop}. 

Several existing works have also studied data dissemination problems in edge computing to decide on different variables, including the placement, routing, and bandwidth allocated in distributing the data. The work in \cite{aral2018decentralized} proposes a decentralized method for the placement of replicas in edge computing. Another work in \cite{yuan2018toward} proposed solutions for content placement and sharing in vehicular networks. Several other works, such as \cite{liu2020fog}, \cite{yang2019efficient}, \cite{singh2020intent}, etc., have also studied data dissemination for vehicular networks. Data offloading in edge computing is another related problem that has been studied by many existing works in literature  \cite{dziyauddin2021computation} \cite{fantacci2020performance}. For example, work in \cite{mitsis2020data} proposed a game-theoretic approach for deciding the data offloading, i.e., how much data is to be offloaded and to which device. 

Compared to existing works, this paper considers coflow scheduling in a multi-hop network of edge devices by jointly optimizing the source selection of flows and flow ordering at network links. Existing works on coflow scheduling in data centers often assume non-blocking switches and focus on ingress and egress ports rather than congestion among flows at different links in the routing paths. Few existing works on coflow scheduling have considered general network topologies, however, these works do not jointly consider selecting the source of flows. Some works related to task scheduling have considered multi-hop paths in the network. However, they have focused more on the computation offloading aspects and ignored the congestion among flows at different links in the path. Finally, several works related to data dissemination focus on data placement, routing, and bandwidth allocation issues. These works do not consider coflow scheduling in a mesh network of edge devices connected using multi-hop paths.

 \section{Conclusion}
 
This paper studies the multi-source coflow scheduling problem in CEC. The problem jointly decides the source and order of flow to minimize the sum of CCT. First, we formulate the problem as a MINLP optimization problem, which is proven to be NP-hard. Then, we propose SCASA, an efficient heuristic to solve the problem that decides the source and order of flows by leveraging the knowledge of flows within and across coflows and network congestion at the links. SCASA was evaluated using numerical simulation and compared with several benchmark solutions. We have done comprehensive simulation experiments by varying different input parameters. Simulation experiments show that SCASA can achieve up to 83\% reduction in the value of the sum of CCT compared to different benchmark solutions that do not consider joint source and flow ordering solutions. One observation that can be made is that there is a tradeoff in terms of computation complexity and CCT performance. SCASA can achieve better performance than other benchmarks in reducing the value of the sum of CCT, however, with some compromise to computation time. 
 
This paper has proposed an offline solution that assumes global network knowledge. This work can be extended in the future by proposing an online solution assuming no a priori knowledge of coflows and data generated at the edge devices. One option is to propose a reinforcement learning-based solution where the action space can decide the source of flows and order of flows at different links. We can even extend this work to include bandwidth sharing at the links instead of limiting one flow at a link at one time by including the bandwidth used by each flow as another decision variable. Furthermore, a distributed solution can also be proposed to have improved scalability.

 \bibliographystyle{IEEEtran}
\bibliography{refer}

% Generated by IEEEtran.bst, version: 1.14 (2015/08/26)
\begin{thebibliography}{10}
\providecommand{\url}[1]{#1}
\csname url@samestyle\endcsname
\providecommand{\newblock}{\relax}
\providecommand{\bibinfo}[2]{#2}
\providecommand{\BIBentrySTDinterwordspacing}{\spaceskip=0pt\relax}
\providecommand{\BIBentryALTinterwordstretchfactor}{4}
\providecommand{\BIBentryALTinterwordspacing}{\spaceskip=\fontdimen2\font plus
\BIBentryALTinterwordstretchfactor\fontdimen3\font minus
  \fontdimen4\font\relax}
\providecommand{\BIBforeignlanguage}[2]{{%
\expandafter\ifx\csname l@#1\endcsname\relax
\typeout{** WARNING: IEEEtran.bst: No hyphenation pattern has been}%
\typeout{** loaded for the language `#1'. Using the pattern for}%
\typeout{** the default language instead.}%
\else
\language=\csname l@#1\endcsname
\fi
#2}}
\providecommand{\BIBdecl}{\relax}
\BIBdecl

\bibitem{dong2021collaborative}
P.~Dong, J.~Ge, X.~Wang, and S.~Guo, ``Collaborative edge computing for social
  internet of things: Applications, solutions, and challenges,'' \emph{IEEE
  Transactions on Computational Social Systems}, 2021.

\bibitem{sahni2017edge}
Y.~Sahni, J.~Cao, S.~Zhang, and L.~Yang, ``Edge mesh: A new paradigm to enable
  distributed intelligence in internet of things,'' \emph{IEEE access}, vol.~5,
  pp. 16\,441--16\,458, 2017.

\bibitem{aral2018decentralized}
A.~Aral and T.~Ovatman, ``A decentralized replica placement algorithm for edge
  computing,'' \emph{IEEE transactions on network and service management},
  vol.~15, no.~2, pp. 516--529, 2018.

\bibitem{liu2020fog}
K.~Liu, K.~Xiao, P.~Dai, V.~Lee, S.~Guo, and J.~Cao, ``Fog computing empowered
  data dissemination in software defined heterogeneous vanets,'' \emph{IEEE
  Transactions on Mobile Computing}, 2020.

\bibitem{yang2019efficient}
L.~Yang, L.~Zhang, Z.~He, J.~Cao, and W.~Wu, ``Efficient hybrid data
  dissemination for edge-assisted automated driving,'' \emph{IEEE Internet of
  Things Journal}, vol.~7, no.~1, pp. 148--159, 2019.

\bibitem{singh2020intent}
A.~Singh, G.~S. Aujla, and R.~S. Bali, ``Intent-based network for data
  dissemination in software-defined vehicular edge computing,'' \emph{IEEE
  Transactions on Intelligent Transportation Systems}, 2020.

\bibitem{luo2020edgevcd}
Q.~Luo, C.~Li, T.~H. Luan, and W.~Shi, ``Edgevcd: Intelligent
  algorithm-inspired content distribution in vehicular edge computing
  network,'' \emph{IEEE Internet of Things Journal}, vol.~7, no.~6, pp.
  5562--5579, 2020.

\bibitem{chowdhury2014efficient}
M.~Chowdhury, Y.~Zhong, and I.~Stoica, ``Efficient coflow scheduling with
  varys,'' in \emph{Proceedings of the 2014 ACM conference on SIGCOMM}, 2014,
  pp. 443--454.

\bibitem{luo2016towards}
S.~Luo, H.~Yu, Y.~Zhao, S.~Wang, S.~Yu, and L.~Li, ``Towards practical and
  near-optimal coflow scheduling for data center networks,'' \emph{IEEE
  Transactions on Parallel and Distributed Systems}, vol.~27, no.~11, pp.
  3366--3380, 2016.

\bibitem{ahmadi2020scheduling}
S.~Ahmadi, S.~Khuller, M.~Purohit, and S.~Yang, ``On scheduling coflows,''
  \emph{Algorithmica}, vol.~82, no.~12, pp. 3604--3629, 2020.

\bibitem{chowdhury2019near}
M.~Chowdhury, S.~Khuller, M.~Purohit, S.~Yang, and J.~You, ``Near optimal
  coflow scheduling in networks,'' in \emph{The 31st ACM Symposium on
  Parallelism in Algorithms and Architectures}, 2019, pp. 123--134.

\bibitem{sundar2018offloading}
S.~Sundar and B.~Liang, ``Offloading dependent tasks with communication delay
  and deadline constraint,'' in \emph{IEEE INFOCOM 2018-IEEE Conference on
  Computer Communications}.\hskip 1em plus 0.5em minus 0.4em\relax IEEE, 2018,
  pp. 37--45.

\bibitem{hong2019qos}
Z.~Hong, H.~Huang, S.~Guo, W.~Chen, and Z.~Zheng, ``Qos-aware cooperative
  computation offloading for robot swarms in cloud robotics,'' \emph{IEEE
  Transactions on Vehicular Technology}, vol.~68, no.~4, pp. 4027--4041, 2019.

\bibitem{queyranne2002approximation}
M.~Queyranne and M.~Sviridenko, ``Approximation algorithms for shop scheduling
  problems with minsum objective,'' \emph{Journal of Scheduling}, vol.~5,
  no.~4, pp. 287--305, 2002.

\bibitem{yang2013task}
J.~Yang, H.~Zhang, Y.~Ling, C.~Pan, and W.~Sun, ``Task allocation for wireless
  sensor network using modified binary particle swarm optimization,''
  \emph{IEEE Sensors Journal}, vol.~14, no.~3, pp. 882--892, 2013.

\bibitem{sahni2018data}
Y.~Sahni, J.~Cao, and L.~Yang, ``Data-aware task allocation for achieving low
  latency in collaborative edge computing,'' \emph{IEEE Internet of Things
  Journal}, vol.~6, no.~2, pp. 3512--3524, 2018.

\bibitem{sahni2020multi}
Y.~Sahni, J.~Cao, L.~Yang, and Y.~Ji, ``Multi-hop multi-task partial
  computation offloading in collaborative edge computing,'' \emph{IEEE
  Transactions on Parallel and Distributed Systems}, vol.~32, no.~5, pp.
  1133--1145, 2021.

\bibitem{adams1988shifting}
J.~Adams, E.~Balas, and D.~Zawack, ``The shifting bottleneck procedure for job
  shop scheduling,'' \emph{Management science}, vol.~34, no.~3, pp. 391--401,
  1988.

\bibitem{qiu2015minimizing}
Z.~Qiu, C.~Stein, and Y.~Zhong, ``Minimizing the total weighted completion time
  of coflows in datacenter networks,'' in \emph{Proceedings of the 27th ACM
  symposium on Parallelism in Algorithms and Architectures}, 2015, pp.
  294--303.

\bibitem{chen2016optimizing}
L.~Chen, W.~Cui, B.~Li, and B.~Li, ``Optimizing coflow completion times with
  utility max-min fairness,'' in \emph{IEEE INFOCOM 2016-The 35th Annual IEEE
  International Conference on Computer Communications}.\hskip 1em plus 0.5em
  minus 0.4em\relax IEEE, 2016, pp. 1--9.

\bibitem{zhou2019fast}
Q.~Zhou, K.~Wang, P.~Li, D.~Zeng, S.~Guo, B.~Ye, and M.~Guo, ``Fast coflow
  scheduling via traffic compression and stage pipelining in datacenter
  networks,'' \emph{IEEE Transactions on Computers}, vol.~68, no.~12, pp.
  1755--1771, 2019.

\bibitem{xu2020scheduling}
R.~Xu, W.~Li, K.~Li, X.~Zhou, and H.~Qi, ``Scheduling mix-coflows in datacenter
  networks,'' \emph{IEEE Transactions on Network and Service Management},
  vol.~18, no.~2, pp. 2002--2015, 2020.

\bibitem{zhang2019coflow}
J.~Zhang, D.~Guo, K.~Li, H.~Qi, X.~Tao, and Y.~Jin, ``Coflow scheduling in the
  multi-resource environment,'' \emph{IEEE Transactions on Network and Service
  Management}, vol.~16, no.~2, pp. 783--796, 2019.

\bibitem{jahanjou2017asymptotically}
H.~Jahanjou, E.~Kantor, and R.~Rajaraman, ``Asymptotically optimal
  approximation algorithms for coflow scheduling,'' in \emph{Proceedings of the
  29th ACM Symposium on Parallelism in Algorithms and Architectures}, 2017, pp.
  45--54.

\bibitem{shafiee2018improved}
M.~Shafiee and J.~Ghaderi, ``An improved bound for minimizing the total
  weighted completion time of coflows in datacenters,'' \emph{IEEE/ACM
  Transactions on Networking}, vol.~26, no.~4, pp. 1674--1687, 2018.

\bibitem{mao2021npscs}
R.~Mao and V.~Aggarwal, ``Npscs: Non-preemptive stochastic coflow scheduling
  with time-indexed lp relaxation,'' \emph{IEEE Transactions on Network and
  Service Management}, vol.~18, no.~2, pp. 2377--2387, 2021.

\bibitem{shafiee2021scheduling}
M.~Shafiee and J.~Ghaderi, ``Scheduling coflows with dependency graph,''
  \emph{IEEE/ACM Transactions on Networking}, 2021.

\bibitem{tian2019scheduling}
B.~Tian, C.~Tian, B.~Wang, B.~Li, Z.~He, H.~Dai, K.~Liu, W.~Dou, and G.~Chen,
  ``Scheduling dependent coflows to minimize the total weighted job completion
  time in datacenters,'' \emph{Computer Networks}, vol. 158, pp. 193--205,
  2019.

\bibitem{tian2018scheduling}
B.~Tian, C.~Tian, H.~Dai, and B.~Wang, ``Scheduling coflows of multi-stage jobs
  to minimize the total weighted job completion time,'' in \emph{IEEE INFOCOM
  2018-IEEE Conference on Computer Communications}.\hskip 1em plus 0.5em minus
  0.4em\relax IEEE, 2018, pp. 864--872.

\bibitem{wang2018multi}
S.~Wang, J.~Zhang, T.~Huang, T.~Pan, J.~Liu, and Y.~Liu,
  ``Multi-attributes-based coflow scheduling without prior knowledge,''
  \emph{IEEE/ACM Transactions on Networking}, vol.~26, no.~4, pp. 1962--1975,
  2018.

\bibitem{xu2018optimizing}
X.~Xu, W.~Li, K.~Li, H.~Qi, and Y.~Jin, ``Optimizing the cost-performance
  tradeoff for coflows across geo-distributed datacenters,'' \emph{IEEE
  Access}, vol.~6, pp. 24\,488--24\,497, 2018.

\bibitem{li2016efficient}
Y.~Li, S.~H.-C. Jiang, H.~Tan, C.~Zhang, G.~Chen, J.~Zhou, and F.~C. Lau,
  ``Efficient online coflow routing and scheduling,'' in \emph{Proceedings of
  the 17th ACM International Symposium on Mobile Ad Hoc Networking and
  Computing}, 2016, pp. 161--170.

\bibitem{zeng2021scheduling}
Y.~Zeng, B.~Ye, B.~Tang, S.~Guo, and Z.~Qu, ``Scheduling coflows of multi-stage
  jobs under network resource constraints,'' \emph{Computer Networks}, vol.
  184, p. 107686, 2021.

\bibitem{chen2018multi}
Y.~Chen and J.~Wu, ``Multi-hop coflow routing and scheduling in data centers,''
  in \emph{2018 IEEE International Conference on Communications (ICC)}.\hskip
  1em plus 0.5em minus 0.4em\relax IEEE, 2018, pp. 1--6.

\bibitem{shi2021coflow}
L.~Shi, Y.~Liu, J.~Zhang, and T.~Robertazzi, ``Coflow scheduling in data
  centers: routing and bandwidth allocation,'' \emph{IEEE Transactions on
  Parallel and Distributed Systems}, vol.~32, no.~11, pp. 2661--2675, 2021.

\bibitem{chiang2021information}
Y.-H. Chiang, H.~Lin, and Y.~Ji, ``Information cofreshness-aware grant
  assignment and transmission scheduling for internet of things,'' \emph{IEEE
  Internet of Things Journal}, vol.~8, no.~19, pp. 14\,435--14\,446, 2021.

\bibitem{wang2018integrating}
H.~Wang, X.~Yu, H.~Xu, J.~Fan, C.~Qiao, and L.~Huang, ``Integrating coflow and
  circuit scheduling for optical networks,'' \emph{IEEE Transactions on
  Parallel and Distributed Systems}, vol.~30, no.~6, pp. 1346--1358, 2018.

\bibitem{zhang2020minimizing}
T.~Zhang, F.~Ren, J.~Bao, R.~Shu, and W.~Cheng, ``Minimizing coflow completion
  time in optical circuit switched networks,'' \emph{IEEE Transactions on
  Parallel and Distributed Systems}, vol.~32, no.~2, pp. 457--469, 2020.

\bibitem{tan2021regularization}
H.~Tan, C.~Zhang, C.~Xu, Y.~Li, Z.~Han, and X.-Y. Li, ``Regularization-based
  coflow scheduling in optical circuit switches,'' \emph{IEEE/ACM Transactions
  on Networking}, vol.~29, no.~3, pp. 1280--1293, 2021.

\bibitem{li2022co}
Z.~Li and H.~Shen, ``Co-scheduler: A coflow-aware data-parallel job scheduler
  in hybrid electrical/optical datacenter networks,'' \emph{IEEE/ACM
  Transactions on Networking}, 2022.

\bibitem{huang2017exploiting}
X.~S. Huang and T.~E. Ng, ``Exploiting inter-flow relationship for coflow
  placement in datacenters,'' in \emph{Proceedings of the First Asia-Pacific
  Workshop on Networking}, 2017, pp. 113--119.

\bibitem{tan2019joint}
H.~Tan, S.~H.-C. Jiang, Y.~Li, X.-Y. Li, C.~Zhang, Z.~Han, and F.~C.~M. Lau,
  ``Joint online coflow routing and scheduling in data center networks,''
  \emph{IEEE/ACM Transactions on Networking}, vol.~27, no.~5, pp. 1771--1786,
  2019.

\bibitem{zhao2020joint}
Y.~Zhao, C.~Tian, J.~Fan, T.~Guan, X.~Zhang, and C.~Qiao, ``Joint reducer
  placement and coflow bandwidth scheduling for computing clusters,''
  \emph{IEEE/ACM Transactions on Networking}, vol.~29, no.~1, pp. 438--451,
  2020.

\bibitem{li2020endpoint}
W.~Li, X.~Yuan, K.~Li, H.~Qi, X.~Zhou, and R.~Xu, ``Endpoint-flexible coflow
  scheduling across geo-distributed datacenters,'' \emph{IEEE Transactions on
  Parallel and Distributed Systems}, vol.~31, no.~10, pp. 2466--2481, 2020.

\bibitem{funai2019computational}
C.~F. Funai, C.~Tapparello, and W.~Heinzelman, ``Computational offloading for
  energy constrained devices in multi-hop cooperative networks,'' \emph{IEEE
  Transactions on Mobile Computing}, 2019.

\bibitem{al2016distributed}
H.~Al-Shatri, S.~M{\"u}ller, and A.~Klein, ``Distributed algorithm for energy
  efficient multi-hop computation offloading,'' in \emph{2016 IEEE
  International Conference on Communications (ICC)}.\hskip 1em plus 0.5em minus
  0.4em\relax IEEE, 2016, pp. 1--6.

\bibitem{hong2019multi}
Z.~Hong, W.~Chen, H.~Huang, S.~Guo, and Z.~Zheng, ``Multi-hop cooperative
  computation offloading for industrial iot-edge-cloud computing
  environments,'' \emph{IEEE Transactions on Parallel and Distributed Systems},
  2019.

\bibitem{munir2016network}
A.~Munir, T.~He, R.~Raghavendra, F.~Le, and A.~X. Liu, ``Network scheduling
  aware task placement in datacenters,'' in \emph{Proceedings of the 12th
  International on Conference on emerging Networking EXperiments and
  Technologies}.\hskip 1em plus 0.5em minus 0.4em\relax ACM, 2016, pp.
  221--235.

\bibitem{pu2015low}
Q.~Pu, G.~Ananthanarayanan, P.~Bodik, S.~Kandula, A.~Akella, P.~Bahl, and
  I.~Stoica, ``Low latency geo-distributed data analytics,'' \emph{ACM SIGCOMM
  Computer Communication Review}, vol.~45, no.~4, pp. 421--434, 2015.

\bibitem{rupprecht2017squirreljoin}
L.~Rupprecht, W.~Culhane, and P.~Pietzuch, ``Squirreljoin: network-aware
  distributed join processing with lazy partitioning,'' \emph{Proceedings of
  the VLDB Endowment}, vol.~10, no.~11, pp. 1250--1261, 2017.

\bibitem{guo2019joint}
Y.~Guo, Z.~Wang, H.~Zhang, X.~Yin, X.~Shi, and J.~Wu, ``Joint optimization of
  tasks placement and routing to minimize coflow completion time,''
  \emph{Journal of Network and Computer Applications}, vol. 135, pp. 47--61,
  2019.

\bibitem{wu2021joint}
Z.~Wu, ``Joint coflow optimization for data center networks,'' \emph{IEEE
  Access}, vol.~9, pp. 108\,402--108\,410, 2021.

\bibitem{chiang2019joint}
Y.-H. Chiang, T.~Zhang, and Y.~Ji, ``Joint cotask-aware offloading and
  scheduling in mobile edge computing systems,'' \emph{IEEE Access}, vol.~7,
  pp. 105\,008--105\,018, 2019.

\bibitem{sahni2020hop}
Y.~Sahni, J.~Cao, L.~Yang, and Y.~Ji, ``Multihop offloading of multiple dag
  tasks in collaborative edge computing,'' \emph{IEEE Internet of Things
  Journal}, vol.~8, no.~6, pp. 4893--4905, 2021.

\bibitem{yuan2018toward}
Q.~Yuan, H.~Zhou, J.~Li, Z.~Liu, F.~Yang, and X.~S. Shen, ``Toward efficient
  content delivery for automated driving services: An edge computing
  solution,'' \emph{IEEE Network}, vol.~32, no.~1, pp. 80--86, 2018.

\bibitem{dziyauddin2021computation}
R.~A. Dziyauddin, D.~Niyato, N.~C. Luong, A.~A. A.~M. Atan, M.~A.~M. Izhar,
  M.~H. Azmi, and S.~M. Daud, ``Computation offloading and content caching and
  delivery in vehicular edge network: A survey,'' \emph{Computer Networks},
  vol. 197, p. 108228, 2021.

\bibitem{fantacci2020performance}
R.~Fantacci and B.~Picano, ``Performance analysis of a delay constrained data
  offloading scheme in an integrated cloud-fog-edge computing system,''
  \emph{IEEE Transactions on Vehicular Technology}, vol.~69, no.~10, pp.
  12\,004--12\,014, 2020.

\bibitem{mitsis2020data}
G.~Mitsis, E.~E. Tsiropoulou, and S.~Papavassiliou, ``Data offloading in
  uav-assisted multi-access edge computing systems: A resource-based pricing
  and user risk-awareness approach,'' \emph{Sensors}, vol.~20, no.~8, p. 2434,
  2020.

\end{thebibliography}

\vfill

\end{document}